\newtheoremstyle{note}
{3pt}
{3pt}
{\itshape}
{1em}
{\itshape}
{:}
{.5em}
{}
\newtheorem{theorem}{Theorem}
\newtheorem{corollary}{Corollary}
\renewenvironment{proof}{{$\quad $\it Proof:}$\ $}{$\hfill\blacksquare$}
\begin{document}
	\vspace{-10pt}
\title{Advanced NOMA Assisted Semi-Grant-Free Transmission Schemes for Randomly\\ Distributed Users \vspace{-8pt}}
\vspace{-10pt}
	
%
\vspace{-10pt}
\author{
Huabing Lu,~\IEEEmembership{Member,~IEEE,}
Xianzhong Xie,~\IEEEmembership{Member,~IEEE,}\\
\vspace{-17pt}
Zhaoyuan Shi,~\IEEEmembership{Member,~IEEE,}
Hongjiang Lei,~\IEEEmembership{Senior Member,~IEEE,}\\ \vspace{-5pt}
Helin Yang,~\IEEEmembership{Member,~IEEE,}
and Jun Cai,~\IEEEmembership{Senior Member,~IEEE}
		        \vspace{-35pt} 
\thanks{H. Lu is with the Key Laboratory of Intelligent Control and Optimization for Industrial Equipment of Ministry of Education, Dalian University of Technology, Dalian 116024, China, and also with the School of Computer Science and Technology, Chongqing University of Posts and Telecommunications, Chongqing 400065, China (e-mail: luhuabing@dlut.edu.cn).}
\thanks{X. Xie is with the School of Computer Science and Technology, Chongqing University of Posts and Telecommunications, Chongqing 400065, China (e-mail: xiexzh@cqupt.edu.cn).}
\thanks{Z. Shi is with the Key Laboratory of Intelligent Perception and Computing of Anhui Province, Anqing Normal University, Anqing 246011, China (e-mail: shizy@stu.cqupt.edu.cn).}

\thanks{H. Lei is with the School of Communication and Information Engineering, Chongqing University of Posts and Telecommunications, Chongqing 400065, China (e-mail: leihj@cqupt.edu.cn).}
\thanks{H. Yang is with the Department of Information and Communication Engineering, School of Informatics, Xiamen University, Xiamen 361005, China (e-mail: helinyang066@xmu.edu.cn).}
\thanks{J. Cai is with the Network Intelligence and Innovation Lab (NI$^2$L), Department of Electrical and Computer Engineering, Concordia University, Montreal, QC H3G 1M8, Canada (e-mail: jun.cai@concordia.ca).}
\vspace{-5pt}

	}
	
	
	\maketitle
	\vspace{-8pt}	
\begin{abstract}
	\vspace{-5pt}
Non-orthogonal multiple access (NOMA) assisted semi-grant-free (SGF) transmission has recently received significant research attention due to its outstanding ability of serving grant-free (GF) users with grant-based (GB) users' spectrum, which {\color{black}greatly improves} the spectrum efficiency and effectively {\color{black}relieves} the massive access problem of 5G and beyond networks. In this paper, we first study the outage performance of the {\color{black}greedy} best user scheduling SGF scheme (BU-SGF) by considering the impacts of Rayleigh fading, path loss, and random user locations. {\color{black}In order to tackle the admission fairness problem of the BU-SGF scheme}, we propose a fair SGF scheme by applying cumulative distribution function (CDF)-based scheduling (CS-SGF), {\color{black}in which the GF user with the best channel relative to its own statistics will be admitted}. Moreover, by employing the theories of order statistics and stochastic geometry, the outage performances of both BU-SGF and CS-SGF schemes are analyzed. Theoretical results show that both schemes can achieve full diversity orders {\color{black}only} when the served users' data rate is capped, which severely {\color{black}limits} the rate performance of SGF schemes. To further address this issue, we propose a distributed power control strategy to relax such data rate constraint, and derive {\color{black}analytical expressions} of the two schemes' outage performances under this strategy. Finally, simulation results validate the fairness performance of the proposed CS-SGF scheme, the effectiveness of the power control strategy, and the accuracy of the theoretical analyses.
		
\textbf{\emph{Index Terms}} ---NOMA, semi-grant-free, CDF-based scheduling, outage probability, fairness.
	\end{abstract}
	
	\IEEEpeerreviewmaketitle
\section{Introduction}
	\renewcommand{\baselinestretch}{0.75} 

 {\color{black}\IEEEPARstart{W}{ith} the fast development of {\color{black}the} Internet of Things (IoT), more and more devices are expected to be connected to the networks. It is predicted that the number of {\color{black}connected devices} will reach 31.4 billion by 2023, and more than 60\% of which will be IoT connections \cite{2018_Ericsson}. These large {\color{black}number} of IoT devices arouse a paradigm shift from the current human-type communication oriented systems,} where the {\color{black}packets} are always quite long and transmitted based on grant-based (GB) protocols (namely, each communication device first transmits a scheduling request to the base station (BS) and then the BS sends a resource allocation grant back). On the contrary, short packets are common for the traffic generated by IoT devices \cite{Shirvanimoghaddam_2017_massive NOMA,2016_Durisi_short packets,2017_Dawy_ Toward mMTC}, which is unsuitable to be transmitted with conventional GB protocols, since the lengthy request-grant process will be prohibitively costly for the signaling overhead and unacceptable as well for the {\color{black}resulting} latency in delay-critical IoT applications \cite{2018_Gharbieh_spatiotemporal,cui_2020_GFNOMA}. This motivates the development of Grant-free (GF) transmissions, where the request-grant process is omitted and some dedicated resource blocks are reserved for these IoT devices to transmit whenever the packets arrive \cite{cui_2020_GFNOMA,2014_Bayesteh_blind detectioni of SCMA,2019_Vaezi_multiple access,Lien_2017_comm}. By applying GF schemes, the transmission delay resulting from {\color{black}the} request-grant process is eliminated, and also the spectrum efficiency can be effectively improved. Nevertheless, without central access control of the BS, collisions may frequently occur in GF transmissions, since the spectrum reserved for GF transmission is ordinarily limited and it is inevitable that multiple users will choose the same resource in massive amount of IoT scenarios. 

{\color{black}Integrating Non-orthogonal multiple access (NOMA) with GF transmission is a promising solution to this problem, by which multiple devices could transmit their signals using the same resource with different power levels or codebooks \cite{Yuan_2016_comm,2019_Dogan_NOMA GF,2019_Yang_sustainable NOMA}. In this paper, we focus on power-domain NOMA \cite{2017_Islam_PNOMA_survey,2017_JSAC_Ding_survey NOMA}. For comprehensive review of code-domain GF NOMA, we refer the readers to \cite{2018_Dai_Survey NOMA,2020_Elbayoumi_NOMA UDN,2020_Shahab_IoT survey} and the references therein. The combination of power domain NOMA with random access (NOMA-RA) scheme for multichannel ALOHA was first discussed in \cite{2017_Choi_ALOHA}, where users can choose the predetermined power levels for uplink transmission. Further, a layered random access scheme was proposed to enhance the throughput of multichannel ALOHA in \cite{2018_Choi_Layered}. By making use of channel inversion technique, the received power levels were set as two \cite{2018_Seo_TVT} and multiple \cite{2018_Seo_coml} target values for NOMA-RA systems, and it is shown in \cite{2021_Yu_TWC} that increasing the number of power levels may further improve the throughput gain. Whereas, only successive interference cancellation (SIC) decoding strategy was considered in \cite{2017_Choi_ALOHA,2018_Choi_Layered,2018_Seo_TVT,2018_Seo_coml,2021_Yu_TWC}. Based on slotted ALOHA (SA) and NOMA (SA-NOMA), \cite{2020_Tegos_TCOM} investigated the performance of both SIC and joint decoding (JD) strategies for wireless sensor networks, which showed that JD could effectively avoid outage error floors and SA-NOMA outperforms SA.} However, in systems with a {\color{black}large quantity} of devices, the number of connections may still exceed the NOMA capability for successful decoding, which deteriorates the system performance \cite{2019_Yang_sustainable NOMA}.

{\color{black}To alleviate this situation}, NOMA assisted semi-grant-free (SGF) transmission schemes, which encourage GB users to share their resources with delay-tolerant IoT devices  {\color{black}transmitting with GF protocols, have received much research attention \cite{2019_Ding_Semi-GF,2019_Yang_sustainable NOMA}. In SGF transmission schemes, some IoT devices with delay-tolerant packets can be bypassed with the resources which would otherwise be solely utilized by the GB users, so that both the connectivity and spectral efficiency can be improved. {\color{black}As} the number of devices that content for the dedicated resources reserved for GF transmissions is reduced, the successful communication probability for the latency-critical IoT devices could be improved.} Compared to pure GB transmission, lower signaling overhead is needed in SGF transmission, meanwhile the collision event is effectively managed compared with pure GF transmission \cite{2019_Ding_Semi-GF}.


\subsection{Related Work}
The NOMA-assisted SGF transmission schemes have been investigated in \cite{2019_Ding_Semi-GF,2020_Yang_APA,2020_Jayanth_PIMRC,2020_Chao_Semi-GF,Zhang_2020_WCL,2020_Ding_New_SGF}. Specifically, the concept of SGF {\color{black}transmission} was firstly proposed in \cite{2019_Ding_Semi-GF}, where two SGF mechanisms were developed to restrict the number of admitted GF users and ensure that the admission of the GF users {\color{black}does} not cause too much performance degradation to the GB users. {\color{black}Based on the framework of \cite{2019_Ding_Semi-GF}, in order to improve the admitted GF users' outage performances, Yang \textit{et al.} \cite{2020_Yang_APA} proposed an adaptive power allocation strategy to restrict the transmit power of the GB user, as the GB user's signal was assumed to be always decoded at the second stage of SIC. Only Rayleigh fading was considered in \cite{2019_Ding_Semi-GF,2020_Yang_APA}, while the impacts of path loss and user locations were not taken into account. In this regard, Jayanth \textit{et al.} \cite{2020_Jayanth_PIMRC} considered a homogeneous user distribution scenario, and two GF users with the largest and second largest channel gains were admitted. To ensure the performance of the GB user, they exploited the principle of underlay cognitive radio to restrict the interference generated by the two admitted GF users within a threshold. Note that, the distances of all GF users to the BS are assumed to be {\color{black}the} same in \cite{2020_Jayanth_PIMRC}. To step further, Zhang \textit{et al.} investigated the spatial effect of random user locations on the performance of SGF schemes \cite{2020_Chao_Semi-GF,Zhang_2020_WCL}. They proposed a dynamic threshold protocol for the admission of the GF users and analyzed the outage performance \cite{2020_Chao_Semi-GF} and  ergodic rate \cite{Zhang_2020_WCL} for the randomly scheduled GF users by employing stochastic geometry.
	
In the aforementioned studies \cite{2019_Ding_Semi-GF,2020_Yang_APA,2020_Jayanth_PIMRC,2020_Chao_Semi-GF,Zhang_2020_WCL}, pre-fixed SIC decoding orders were assumed, which led {\color{black}to} performance degradation for the GB user or outage error floors for the GB/GF users. To this end, Ding \textit{et al.} \cite{2020_Ding_New_SGF} proposed a new SGF scheme by resorting to hybrid successive interference cancellation (HSIC) \cite{2020_Ding_HSIC}, in which the outage error floors could be effectively avoided if the product of the GB and GF users' target signal-to-interference-plus-noise ratio (SINR) is less than 1, while the performance of the GB user could still be guaranteed as it solely occupies the channel. In other words, for the new SGF scheme, the admission of GF users can effectively improve the spectrum efficiency without affecting the GB user's performance.}

\subsection{Motivation and Contributions}
Even though the aforementioned work has presented some innovative studies on NOMA-assisted SGF transmission schemes, three critical problems are still waiting for solution: \textit{1)} Only small scale fading (Rayleigh fading) was considered in \cite{2019_Ding_Semi-GF,2020_Yang_APA,2020_Jayanth_PIMRC,2020_Ding_New_SGF}, and the impact of user locations was {\color{black} not taken into consideration}. Although the impact of user locations was studied in \cite{2020_Chao_Semi-GF} and \cite{Zhang_2020_WCL}, they failed to exploit multi-user diversity since random user scheduling was considered; \textit{2)} Most of the existing SGF schemes preferred to schedule the GF users with the strongest or weakest channel gains,  {\color{black}in order to ensure the performance of the GB user \cite{2019_Ding_Semi-GF} or maximize the rate performance of the scheduled GF users \cite{2019_Ding_Semi-GF,2020_Yang_APA,2020_Jayanth_PIMRC,2020_Ding_New_SGF}}. This scheduling strategy may lead to the fairness\footnote{In general, fairness in user scheduling can be divided into two categories \cite{2015-Jin-Fundamental_limits_CDF}: throughput-based fairness and resource-based fairness. This paper focuses on resource-based fairness, namely, the GF users with different distances to the BS have an equal probability in scheduling.} problem, since the users closer to the BS (cell boundary) may be scheduled more frequently due to smaller (larger) path loss. {\color{black}In practice, both sum rate and access fairness are critical system performance indicators in wireless networks \cite{2013_Sediq_TWC,2014_Shi_CST}, especially for opportunistic scheduling \cite{2005_Tse}. Hence, it is necessary to develop {\color{black}an} SGF scheme {\color{black}} can ensure fair admission chance for the GF users with {\color{black}a} reasonable rate performance guarantee;} \textit{3)} {\color{black}In the new SGF scheme \cite{2020_Ding_New_SGF}}, robust transmission of the GF users (namely, the GF users can achieve non-zero diversity orders) can be achieved only in the case of {\color{black}{capped}} target rate pairs. It is necessary to develop {\color{black}an} SGF scheme {\color{black}} can guarantee robust transmission in all cases. 

Motivated by the previous discussions, this paper dedicates to {\color{black}investigating} the SGF transmission schemes in a more practical scenario of considering the impact of spatial user locations. To be specific, we first {\color{black}extend} the analysis on the outage performance of {\color{black}the} SGF scheme with best user scheduling (BU-SGF) (namely, the GF user with the maximal data rate will be scheduled \cite{2020_Ding_New_SGF}) under a channel model consisting of Rayleigh fading, path loss, and random user locations by employing stochastic geometry. After that, to address the admission fairness issue inherent in {\color{black}the} BU-SGF scheme, we introduce cumulative distribution function (CDF)-based scheduling to SGF scheme (termed as CS-SGF scheme), where the GF user whose channel condition is at its best state and most unlikely to be better will be admitted. Compared to the existing fair schemes, i.e., random selection SGF schemes \cite{2020_Chao_Semi-GF,Zhang_2020_WCL}, the proposed CS-SGF scheme can effectively exploit multi-user diversity. Moreover, we develop a power control strategy, which can relieve the restrictions on the users' target rates for achieving full diversity orders and further enhance the outage performance. The main contributions of this paper are outlined as follows:

$\bullet$ We develop a tractable performance analysis framework for {\color{black}the} BU-SGF scheme integrating Rayleigh fading, path loss, and spatial user locations. The proposed framework can be easily extended to analyze the performance of other fading channel models, such as Nakagami-$m$ fading, Rician fading, and so on.

$\bullet$ We propose a fair admission scheme for SGF transmission systems by invoking CDF-based scheduling, which can effectively utilize multi-user diversity.

$\bullet$ For both BU-SGF and CS-SGF schemes, we analyze the outage performances of the admitted GF users by applying order statistics and stochastic geometry. {\color{black}Meanwhile, we also derive the rate constraints under which both schemes can achieve full or zero diversity orders and identify the causes resulting in error floors.}

$\bullet$ {\color{black}We propose a distributed power control strategy for both GB and GF users to eliminate outage error floors and enhance the admitted GF users' outage performance. We then re-evaluate the outage performances of BU-SGF and CS-SGF schemes with the proposed power control strategy and show insights on why the error floors can be avoided after applying the proposed power control strategy.}



{\color{black}Our work extends the BU-SGF scheme \cite{2020_Ding_New_SGF} in the following several aspects: \textit{1)} Only Rayleigh fading is considered in \cite{2020_Ding_New_SGF}, while our work takes both the path loss and the two-dimensional spatial locations of the users into consideration. Meanwhile, a more general outage performance analysis framework is proposed which can be adapted to other channel models with the impact of both path loss and spatial user locations; \textit{2)} The condition on the theoretical analysis in \cite{2020_Ding_New_SGF}, i.e., the product of the two users' target SINRs is less than 1, is removed to make our analysis more general; \textit{3)} The outage performance of the traditional BU-SGF scheme is re-evaluated by considering the power control strategy.}
 
\subsection{Organization}
The rest of this paper is organized as follows. Section \ref{system model} introduces the system model. Two SGF transmission schemes are presented in Section \ref{two schemes}. In Section \ref{Performance of two schemes}, the outage performances of the two schemes with fixed transmit power are analyzed. Section \ref{Propose PC and performance analyze} proposes the power control strategy and analyzes the outage performances of the two schemes after applying the power control strategy. In Section \ref{Simulation and discussion}, simulation results are presented to verify the theoretical analyses and Section \ref{conclusion} concludes this paper.



	
\section{System Model}\label{system model}
In this section, the signal model is presented first, then the decoding scheme with HSIC and performance metric are introduced.

\begin{figure}[t]
	\centering
	\includegraphics[width=0.6\textwidth]{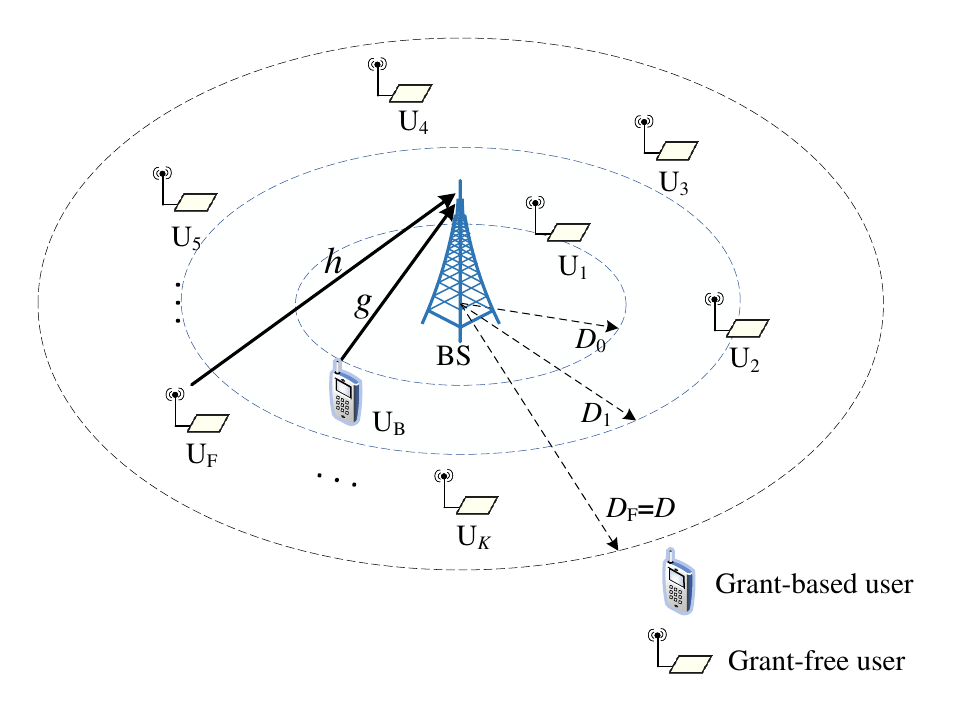}
			\vspace{-10pt}
	\caption{\color{black}An illustration of the considered SGF system model.}
	\label{fig1_system model}
	\vspace{-15pt}
\end{figure}

\subsection{Signal Model}
{\color{black}As shown in Fig. 1,} a single-cell uplink cellular network is considered, where the BS is located at the center of the coverage disc area with radius $D$. Similar to \cite{2019_Ding_Semi-GF,2020_Yang_APA,2020_Ding_New_SGF}, we consider $K$ GF users {\color{black}(denoted as U$_k$, $k\in\{1,\dots,K\}$)} and one GB user {\color{black}(denoted as $\text{U}_\text{B}$)}, where these GF users are distributed in the disc area $\mathcal{D}_\text{F}$ with radius $D_\text{F}$ $(D_\text{F}\leq D)$, and the GB user is randomly deployed in a ring region $\mathcal{D}_\text{B}$ with inner radius $D_0$ ($D_0\geq0$) and outer radius $D_1$ ($D_1\leq D$). The GF users are randomly distributed according to homogeneous Binomial point process (HBPP) \cite{2012_Haenggin_stochastic geometry}, namely, the GF users are randomly deployed within the coverage area of the BS. The GB user communicates with the BS in conventional grant-based protocol and has been allocated one specific resource block. Thus, we assume that the channel state information (CSI) of the GB user is known to the BS \cite{Zhang_2020_WCL}. {\color{black}Similar to \cite{2020_Ding_New_SGF}, in each time slot, one GF user, denoted as U$_{\text{F}}$ $(\text{U}_{\text{F}}\in \{\text{U}_1,\dots,\text{U}_K\})$, will be admitted for transmission using the resource block allocated to U$_{\text{B}}$ after distributed contention\footnote{Distributed contention has been widely used in opportunistic carrier sensing\cite{2020_Ding_New_SGF,2005_Zhao_opportunistic carrier sensing,2006_Bletsas_network_path_selection}, where the BS can schedule the most preferred user in a distributed manner. Take the strategy proposed in \cite{2006_Bletsas_network_path_selection} as an example, which selects the user with the best channel condition to transmit. After estimating the CSI, each user selects a backoff time, e.g., $\pi_k$ for the $k$-th user, which \textcolor{black}{decreases monotonously with increasing} channel gain. Once the contention time window (with duration $\pi_0$) begins, the $k$-th user will send a flag to the BS after $\pi_k$ $(\pi_k<\pi_0)$ expires. Thus, the user with the best channel will send its flag first and hence be identified to the BS.}, where the contention criteria will be specified in Section \ref{two schemes}.}


{\color{black}It is assumed that all nodes are equipped with a single antenna.} We consider a composite channel model with both quasi-static Rayleigh fading and large scale path loss, and the channel coefficients are assumed to be invariant during each time slot and change independently between slots. {\color{black}At the beginning of each time slot, we assume that each user can estimate its CSI by exploiting pilot signals sent by the BS.} The channel between the $k$-th GF user U$_k$ and the BS is modeled as $h_k=\frac{\zeta_k}{\sqrt{1+r_k^{\alpha}}}$, where $r_k$ represents the distance between U$_k$ and the BS, $\alpha$ denotes the path loss exponent, and $\zeta_k$ represents Rayleigh fading coefficient with {\color{black}$\zeta_k\sim\mathcal{CN}(0,1)$. Without} loss of generality, we assume that the GF users' channel gains are ordered as\footnote{Note that, this assumption is used to facilitate performance analysis, and all nodes in the system do not know this order \cite{2020_Ding_New_SGF}.}
\begin{equation}\label{channel order}
	\begin{aligned}
	|h_1|^2\leq\dots\leq|h_K|^2.
	\end{aligned}
\end{equation}
 Similarly, the channel of U$_{\text{B}}$ to the BS is defined as $g=\frac{\zeta_\text{B}}{\sqrt{1+r_{\text{B}}^{\alpha}}}$, where $\zeta_\text{B}\sim\mathcal{CN}(0,1)$ and $r_{\text{B}}$ denotes the distance between U$_{\text{B}}$ and the BS.

Based on these assumptions, the CDFs of the unordered channel gains of the GF and GB users can be respectively expressed as \cite{2014_Ding_random_deployed,2020_Lu_TVT}
\begin{equation}\label{CDF_F_F}
\begin{aligned}
F_\text{F}(x)
=&\frac{2}{D_\text{F}^2}\int_{0}^{D_\text{F}}\left[1-e^{-(1+r^\alpha)x}\right]rdr
\overset{(a)}{\approx}\frac{1}{2}\sum_{l=1}^{L}\Psi_l\left(1-e^{-\mu_lx}\right),\\
\end{aligned}
\end{equation}
\begin{equation}\label{CDF_unordered}
\begin{aligned}
F_\text{B}(y)
=&\frac{2}{D_1^2-D_0^2}\int_{D_0}^{D_1}\left[1-e^{-(1+r^\alpha)y}\right]rdr
\overset{(b)}{\approx}\frac{1}{D_1+D_0}\sum_{n=1}^{N}\Phi_n\left(1-e^{-c_ny}\right),\\
\end{aligned}
\end{equation}
where $\psi_l=\text{cos}\left(\frac{2l-1}{2L}\pi \right)$, $\mu_l=1+\left(\frac{D_\text{F}}{2}+\frac{D_\text{F}}{2}\psi_l\right)^\alpha$, $\Psi_l=\frac{\pi}{L}\sqrt{1-\psi_l^2}(1+\psi_l)$, $\varphi_n=\text{cos}\left(\frac{2n-1}{2N}\pi \right)$, $\phi_n=\frac{D_1+D_0}{2}+\frac{D_1-D_0}{2}\varphi_n$, $\Phi_n=\frac{\pi}{N}\sqrt{1-\varphi_n^2}\phi_n$, $c_n=1+\phi_n^\alpha$, $L$ and $N$ are parameters for ensuring complexity-accuracy trade-off. {\color{black}Due to the integrals in (\ref{CDF_F_F}) and (\ref{CDF_unordered}) can not be calculated in many communication scenarios (i.e., $\alpha>2$) \cite{2014_Ding_random_deployed}, the Gaussian-Chebyshev quadrature \cite{Gaussian_Chebyshev} is applied to perform approximation operations in $(a)$ and $(b)$.} From (\ref{CDF_unordered}), we can derive the probability density function (pdf) of U$_{\text{B}}$'s channel gain as

\begin{equation}\label{pdf_GB}
\begin{aligned}
f_\text{B}(y)
\approx\frac{1}{D_1+D_0}\sum_{n=1}^{N}\Phi_nc_ne^{-c_ny}.\\
\end{aligned}
\end{equation}

{\color{black} In each time slot,} {\color{black}the BS will receive the superimposed signals transmitted from the GB user U$_\text{B}$ and the admitted GF user U$_\text{F}$, which can be expressed as}
\begin{equation}
\begin{aligned}	
y=h\sqrt{P_{\text{F}}}s_{\text{F}}+g\sqrt{P_{\text{B}}}s_{\text{B}}+n_0,
\end{aligned}
\end{equation}
{\color{black}where $s_\nu$ ($\nu\in\{\text{B},\text{F}\}$) denotes the transmit symbol of $\text{U}_\nu$ with $\mathbb{E}\{|s_\nu|^2\}=1$. $P_\nu$ represents the transmit power of $\text{U}_\nu$, $h$ ($g$) denotes the channel from $\text{U}_{\text{F}}$ ($\text{U}_{\text{B}}$) to the BS,} and $n_0$ represents the additive white Gaussian noise (AWGN) with zero mean and variance {\color{black}$\sigma^2$}. To facilitate the theoretical analysis, we assume that all the GF users have the same target data rate. Let $R_{\text{B}}$ $(R_{\text{F}})$ and $\gamma_{\text{B}}=2^{R_{\text{B}}}-1$ $(\gamma_{\text{F}}=2^{R_{\text{F}}}-1)$ respectively represent the target data rate and the target SINR of U$_{\text{B}}$ ($\text{U}_{\text{F}}$). Assuming the maximal transmit powers of the GB and GF users are the same and denoted as $P_m$. The notations used in this paper are summarized in Table \ref{table1}.

\begin{table}[t!]
	\centering
	\caption{List of notations}
	\renewcommand\arraystretch{1.4} 
	\begin{tabular}{|l|l|}		
		\hline
		\textbf{Notation} & \textbf{Description}\\
		\hline
		$K$ & Number of the GF users\\
		$\mathcal{D}_\text{F}\ (\mathcal{D}_\text{B})$& Disc (ring) region distributed with GF (GB) users\\
		$D$&Coverage area of the BS\\
		$D_\text{F}$&Radius of the disc region $\mathcal{D}_\text{F}$\\
		$D_0\ (D_1)$&Inner (outer) radius of the ring region $\mathcal{D}_\text{B}$\\
		$\text{U}_{k}$&The $k$-th GF user $(k\in\{1,\dots,K\})$\\
		$\text{U}_{\text{F}}$&The admitted GF user $(\text{U}_{\text{F}}\in \{\text{U}_1,\dots,\text{U}_K\})$\\
		U$_{\text{B}}$&The GB user\\		
		$\alpha$&The path loss exponent\\
		$h_k$&Channel of $\text{U}_k$ with the BS\\
		$h\ (g)$&Channel of $\text{U}_{\text{F}}$ (U$_{\text{B}}$) with the BS\\
		$r_k$&Distance between $\text{U}_k$ and the BS\\
		$r_\text{F}\ (r_\text{B})$&Distance between $\text{U}_{\text{F}}$ (U$_{\text{B}}$) and the BS\\
		$R_{\text{F}}\ (R_{\text{B}})$ & Target rate of $\text{U}_{\text{F}}$ (U$_{\text{B}}$) \\
		$\gamma_{\text{F}}\ (\gamma_{\text{B}})$ & Target SINR of $\text{U}_{\text{F}}$ (U$_{\text{B}}$), $\gamma_{\text{F}}=2^{R_{\text{F}}}-1$ $(\gamma_{\text{B}}=2^{R_{\text{B}}}-1)$ \\		
		$\sigma^2$& The noise power\\
		$P_\text{F}\ (P_\text{B})$  &Transmit power of $\text{U}_\text{F}\ (\text{U}_\text{B})$\\
		{\color{black}$\rho_\text{F}\ (\rho_\text{B})$}  &{\color{black}Transmit signal-to-noise ratio (SNR) of $\text{U}_\text{F}\ (\text{U}_\text{B})$, $\rho_\text{F}=\frac{P_\text{F}}{\sigma^2}$, $\rho_\text{B}=\frac{P_\text{B}}{\sigma^2}$}\\	
		$P_m\ (\rho_m)$  &Maximal transmit power (SNR) of all users, $\rho_m=\frac{P_m}{\sigma^2}$\\
		$P_k\ (\rho_k)$  &Transmit power (SNR) of $\text{U}_k$\\		
		$\alpha_{\text{F}}\ (\alpha_{\text{B}})$ & Target channel gain of $\text{U}_{\text{F}}$ (U$_{\text{B}}$), $\alpha_{\text{F}}=\frac{\gamma_{\text{F}}}{\rho_\text{F}}$, $\alpha_{\text{B}}=\frac{\gamma_{\text{B}}}{\rho_\text{B}}$ \\
		{\color{black}$\alpha'_{\text{F}}\ (\alpha'_{\text{B}})$} & {\color{black}Minimal target channel gain of $\text{U}_{\text{F}}$ (U$_{\text{B}}$), $\alpha'_{\text{F}}=\frac{\gamma_{\text{F}}}{\rho_m}$, $\alpha'_{\text{B}}=\frac{\gamma_{\text{B}}}{\rho_m}$} \\
		$f_X(\cdot)$     & Probability density function of $X$  \\
		$F_X(\cdot)$ & Cumulative distribution function of $X$  \\
		$\mathbb{P}\{\cdot\}$ & Probability of an event \\
		$\triangleq$& Be defined as\\
		$f(a\mapsto b)$    & Replace $a$ in expression $f$ with $b$\\
		$\mathcal{CN}(\mu,\delta^2)$&Complex Gaussian random variable with mean $\mu$ and variance $\delta^2$ \\
		\hline
	\end{tabular}%
	\label{table1}%
\end{table}%

\subsection{Decoding with HSIC}
 {\color{black}In NOMA transmission, SIC is applied at the receiver to decode both users' signals, where one user's signal is decoded and subtracted from the compound received signal first, and then another user's signal is decoded. In fixed SIC (FSIC), the decoding order is pre-determined and is commonly employed in conventional SGF schemes \cite{2019_Ding_Semi-GF,2020_Yang_APA,2020_Jayanth_PIMRC,2020_Chao_Semi-GF,Zhang_2020_WCL}. However, FSIC may result in outage error floors, which leads to transmission robustness degradation of the GF users.

In order to effectively avoid outage error floors \cite{2020_Ding_HSIC}, HSIC is employed in this paper, where the decoding order is dynamically changed based on the instantaneous received SNR of $\text{U}_{\text{F}}$.}  Prior to user scheduling, the BS will first broadcast a threshold, denoted as $\tau_0\triangleq\text{max}\left\{0,\tau(|g|^2)\right\}$, to all the GF users \cite{2020_Ding_New_SGF}. Here, $\tau(|g|^2)=\alpha_\text{B}^{-1}|g|^2-1$, {\color{black}which} is derived based on the condition that the BS can successfully decode $\text{U}_{\text{B}}$'s signal at the first stage of SIC, namely, $\text{log}\left(1+\frac{\rho_\text{B}|g|^2}{\tau(|g|^2)+1}\right)\geq R_{\text{B}}$. {\color{black}Determing decoding order of HSIC can be divided into the following two cases:}
\begin{itemize}
	\item When the effective received SNR of $\text{U}_{\text{F}}$'s signal is larger than $\tau_0\ \text{(namely,}\ \rho_{\text{F}}|h|^2>\tau_0)$, the BS will decode $\text{U}_{\text{F}}$'s signal first with a data rate $\text{log}\left(1+\frac{\rho_{\text{F}}|h|^2}{\rho_{\text{B}}|g|^2+1}\right)$. Otherwise, with the opposite decoding order, $\text{U}_{\text{B}}$'s signal cannot be successfully decoded, since $\rho_{\text{F}}|h|^2>\tau_0$ leads to $\text{log}\left(1+\frac{\rho_{\text{B}}|g|^2}{\rho_{\text{F}}|h|^2+1}\right)<R_{\text{B}}$, {\color{black}while $\text{U}_{\text{B}}$'s QoS should be satisfied with priority in SGF transmissions.} Hence, $\text{U}_{\text{B}}$'s signal will be decoded at the second stage of SIC, with a data rate $\text{log}\left(1+{\color{black}\rho_{\text{B}}}|g|^2\right)$, which is the same as that achieved in OMA.
	\item {\color{black}When  $\rho_{\text{F}}|h|^2\leq\tau_0$, the BS will decode $\text{U}_{\text{B}}$'s signal first. Actually, in this case, for any decoding order, $\text{U}_{\text{B}}$ can achieve the same performance as that in OMA transmission, and the BS can decode $\text{U}_{\text{B}}$'s signal at either the first or the second stage of SIC. Accordingly, $\text{U}_{\text{F}}$ will achieve a data rate of $\text{log}\left(1+\rho_{\text{F}}|h|^2\right)$ or $\text{log}\left(1+\frac{\rho_{\text{F}}|h|^2}{\rho_{\text{B}}|g|^2+1}\right)$. Since $\text{log}\left(1+\rho_{\text{F}}|h|^2\right)>\text{log}\left(1+\frac{\rho_{\text{F}}|h|^2}{\rho_{\text{B}}|g|^2+1}\right)$, $\text{U}_{\text{B}}$'s signal will be decoded first to maximize $\text{U}_{\text{F}}$'s data rate.}
\end{itemize}

{\color{black}\subsection{Performance Metric}
In this paper, we use outage probability as a performance metric for different SGF schemes, which represents the probability {\color{black}that} the instantaneous achievable rate of an admitted GF user is less than a target rate. To gain more insights, diversity order will also be derived. The diversity order highlights the asymptotic scaling law of the outage probability to the transmit SNR, which is defined as \cite{diversity_order}
\begin{equation}\label{def_DV}
	\begin{aligned}
		d=-\underset{\rho\rightarrow\infty}{\text{lim}}\frac{\text{log}\mathcal{P}(\rho)}{\text{log}\rho},
	\end{aligned}
\end{equation}
where $\mathcal{P}(\rho)$ and $\rho$ denote the outage probability and transmit SNR, respectively.}

We would like the diversity order to be non-zero, which {\color{black}indicates that} the outage probability will {\color{black}constantly decrease with the increase of the transmit SNR}. Intuitively, we would also like the diversity order to be as large as possible, which means the outage probability will decrease fast {\color{black}with the increase of transmit SNR}. However, when $\rho\rightarrow\infty$, if the outage probability is a constant and {\color{black}is independent of} the transmit SNR, an outage error floor occurs, which may lead to a degradation of transmission robustness.

\section{SGF schemes}\label{two schemes}
In this section, we first introduce the BU-SGF scheme \cite{2020_Ding_New_SGF} with randomly deployed users, then the proposed fair CS-SGF scheme is presented.

\subsection{BU-SGF Scheme}
{\color{black} In BU-SGF scheme, the GF user which can achieve the maximal data rate will be admitted to access U$_\text{B}$’s channel. It can effectively utilize multi-user diversity and avoid the outage error floor of the admitted GF user, while guaranteeing the GB user's performance to be the same as it solely occupies the channel \cite{2020_Ding_New_SGF}. However, only the Rayleigh fading channel was taken into consideration in \cite{2020_Ding_New_SGF} and the analytical results can not be easily extended to other channel models. In reality, the users are always randomly distributed and their channels are also impacted by the path loss, therefore, it is necessary to consider these two factors in performance analysis. Instead, we extend the traditional BU-SGF scheme, where the admission procedure consists of the {\color{black}following} six steps:}
\begin{enumerate}
	\item The BS sends pilot signals.
	\item \textcolor{black}{Each user estimates its own CSI.}
	\item $\text{U}_\text{B}$ feedbacks its transmit SNR $\rho_\text{B}$, target rate $R_\text{B}$, and CSI $g$ to the BS.
	\item The BS calculates $\text{U}_\text{B}$'s decoding threshold $\tau_0$, and broadcasts $\text{U}_{\text{B}}$'s effective received SNR $\rho_\text{B}|g|^2$ and $\tau_0$ to all GF users.
	\item $\text{U}_k$ calculates its transmit data rate {\color{black}(if admitted), which is $\text{log}\left(1+\frac{\rho_k|h_k|^2}{\rho_{\text{B}}|g|^2+1}\right)$ if $\rho_k|h_k|^2>\tau_0$, or $\text{log}\left(1+\rho_k|h_k|^2\right)$ if $\rho_k|h_k|^2\leq \tau_0$,} {\color{black}where $\rho_k|h_k|^2$ denotes $\text{U}_k$'s effective received SNR (if admitted).} Note that, HSIC is applied here.
	\item The GF user with the maximal data rate will be admitted to transmit through distributed contention. Thus, the contention {\color{black}criterion} is each GF user's achievable data rate.
\end{enumerate}


In BU-SGF scheme, the achievable rate of $\text{U}_k$, {\color{black}$\text{log}\left(1+\frac{\rho_k|h_k|^2}{\rho_{\text{B}}|g|^2+1}\right)$ or $\text{log}\left(1+\rho_k|h_k|^2\right)$,} is an increasing function with respect to the channel gain $|h_k|^2$. {\color{black} By considering randomly distributed GF users  with different distances to the BS, the GF users closer to the BS can achieve higher data rate due to smaller path loss.  Since the BU-SGF scheme always admits the GF user with the largest data rate, the GF users closer to the BS will be scheduled more often. On the contrary, the GF users far from the BS will be scheduled seldomly and their generated data may become overflow, which results in severe fairness issue.} In the next subsection, we will propose a fair SGF scheme, which can schedule each GF user with equal probability.

	
\subsection{CS-SGF Scheme}

  We handle the fairness issue of SGF scheme by resorting to CDF-based scheduling, where the GF user with the largest CDF value about its channel gain, namely, the GF user whose channel is good enough relative to its own statistics, will be admitted \cite{2005_Park_packet_CDF}. Since all the GF users' channels are independent with each other and their channel gains' CDF values are uniformly distributed in [0, 1] \cite{2015-Jin-Fundamental_limits_CDF}, each GF user will have the same probability to obtain the largest CDF value and then the admission fairness can be guaranteed. 
 Assume that the CDF of $\text{U}_k$'s channel gain $|h_k|^2$ is denoted as $F_k(x)$\footnote{In this paper, we use ``CDF" to denote the cumulative distribution function, e.g., $F_k(x)$, and use ``CDF value" to represent the corresponding output value of a CDF with a specific input $x$.}. For Rayleigh small-scale fading, the CDF of $\text{U}_k$'s channel gain with a given distance $r_k$ (the distance from $\text{U}_k$ to the BS) can be expressed as
	\begin{equation}
	F_k(x|r_k)=1-e^{-(1+r_k^\alpha)x}.
	\end{equation}
	
{\color{black}We assume that the BS sends pilot signals at the beginning of each time slot for synchronizing uplink transmissions. In time division duplexing (TDD) mode, each GF user can estimate its channel {\color{black}gains} by measuring these pilot signals \cite{2017_Choi_ALOHA}. {\color{black}By applying non-parametric CDF based scheduling (NPCS) or parametric CDF based scheduling (PCS) \cite{2014_Nguyen_TSP_Leaning_methods}, each GF user can estimate its CDF based on the estimation of channel gains \cite{2015-Jin-Fundamental_limits_CDF}.} When a GF user wants to transmit data at a specific time slot, it will estimate the instantaneous channel gain at the beginning of that time slot, and then obtain the corresponding CDF value by substituting the estimated result into the CDF.} By using distributed contention control strategy  \cite{2006_Bletsas_network_path_selection}, the GF user with the largest CDF value can be admitted. The contention {\color{black}criterion} is each GF user's CDF value, namely, each GF user's backoff time is set to be inversely proportional to its CDF value. The admission procedure of CS-SGF scheme can be outlined as follows:
\begin{enumerate}
	{\color{black}\item The BS sends pilot signals.
	\item Each user estimates its own CSI (Based on that, each user can estimate its CDF after a long-term observation of the CSI.).
	\item $\text{U}_\text{B}$ feedbacks its transmit SNR $\rho_\text{B}$, target rate $R_\text{B}$, and CSI $g$ to the BS.
	\item The BS calculates $\text{U}_\text{B}$'s decoding threshold $\tau_0$, and broadcasts $\text{U}_{\text{B}}$'s effective received SNR $\rho_\text{B}|g|^2$ and $\tau_0$ to all GF users.
	\item The GF user with the maximal CDF value will be admitted to transmit through distributed contention, {\color{black}and }the contention {\color{black}criterion} is each GF user's CDF value.

\item The admitted GF user, U$_\text{F}$, calculates its transmit data rate, which is $\text{log}\left(1+\frac{\rho_\text{F}|h_\text{F}|^2}{\rho_{\text{B}}|g|^2+1}\right)$ if $\rho_\text{F}|h_\text{F}|^2>\tau_0$, or $\text{log}\left(1+\rho_\text{F}|h_\text{F}|^2\right)$ if $\rho_\text{F}|h_\text{F}|^2\leq \tau_0$. HSIC is applied here.
}
\end{enumerate}

\section{Performance Analysis for SGF Schemes with Fixed Transmit Power}\label{Performance of two schemes}
{\color{black}In this section, the outage probabilities and achieved diversity orders of $\text{U}_\text{F}$ are analyzed for both BU-SGF and CS-SGF schemes. Note that, since U$_{\text{B}}$ can always achieve the same performance as that in OMA for both schemes, we only characterize $\text{U}_{\text{F}}$'s outage performance \cite{2020_Ding_New_SGF}. The main steps of the analysis procedure are listed as follows:
\begin{enumerate}
	\item Derive $\text{U}_{\text{F}}$'s outage probability expressions. 
	
	\item Convert, combine, and/or simplify the outage probability expressions for easier calculation.
	
	\item Calculate the approximation expressions for the outage probabilities by using {\color{black}the CDF/pdf expressions of the users' channel gains.}
	
	\item Derive the high SNR approximations of the outage probabilities and the achieved diversity orders.
\end{enumerate}}

 For ease of theoretical analysis, the GF users are assumed to use a same fixed transmit SNR $\rho_\text{F}$ in both BU-SGF and CS-SGF schemes, namely, $\rho_k=\rho_\text{F}\ \text{for}\ k\in\{1,\dots,K\}$.




\subsection{Performance Analysis for BU-SGF Scheme}
{\color{black}According to the previous description, the outage probability of $\text{U}_{\text{F}}$ for BU-SGF scheme can be denoted as}
\begin{equation}
	\begin{aligned}
	\mathcal{P}_{\text{BU}}=&\sum_{k=0}^{K}\mathbb{P}\left\{E_k,\text{max}\{\text{log}(1+\rho_{\text{F}}|h_i|^2),1\leq i\leq k\}<R_{\text{F}},\right.\\ &\quad\qquad\text{max}\{\text{log}(1+\frac{\rho_{\text{F}}|h_i|^2}{\rho_{\text{B}}|g|^2+1}),k< i\leq K\}<R_{\text{F}}\},
	\end{aligned}
\end{equation}
where $E_k$ denotes the event that there are $k$ users' effective received SNRs are less than $\text{U}_{\text{B}}$'s decoding threshold $\tau_0$.

As the GF users transmit with fixed SNR $\rho_{\text{F}}$ and their channel gains are ordered as (\ref{channel order}), the outage probability can be rewritten as
\begin{equation}\label{P_BU_rewritten}
\begin{aligned}
\mathcal{P}_{\text{BU}}=&\mathbb{P}\left\{E_0,R_{K}^{\text{fs}}<R_{\text{F}}\right\}+\mathbb{P}\left\{E_K,R_{K}^{\text{ss}}<R_{\text{F}}\right\}+\sum_{k=1}^{K-1}\mathbb{P}\left\{E_k,R_{k}^{\text{ss}}<R_{\text{F}},R_{K}^{\text{fs}}<R_{\text{F}}\right\},\\
\end{aligned}
\end{equation}
{\color{black}where $R_{k}^{\text{fs}}\triangleq\text{log}(1+\frac{\rho_{\text{F}}|h_k|^2}{\rho_{\text{B}}|g|^2+1})$ and $R_{k}^{\text{ss}}\triangleq\text{log}(1+\rho_{\text{F}}|h_k|^2)$ represent the $k$-th ($k\in\{1,\dots,K\}$) GF user's achievable rates when its signal is decoded at the first and second stages of SIC, respectively. The superscripts ``fs'' and ``ss'' refer to the first and second stages of SIC, respectively.} In (\ref{P_BU_rewritten}), the first term shows the outage probability of the admitted GF user (namely, the $K$-th user whose signal is decoded at the first stage of SIC) when all GF users' effective received SNRs are larger than $\text{U}_\text{B}$'s decoding threshold $\tau_0$; the second term computes the outage probability of the admitted GF user (the $K$-th user whose signal is decoded at the second stage of SIC) when all the users' effective received SNRs are lower than $\tau_0$; and the third term considers the cases when there are $k\ (1\leq k\leq K-1)$ users with effective received SNRs lower than $\tau_0$, the probability of the admitted GF user (either the $k$-th user whose signal is decoded at the second stage of SIC, or the $K$-th user whose signal is decoded at the first stage of SIC) is in outage.

Note that, in the case of $|g|^2<\alpha_\text{B}=\frac{\gamma_{\text{B}}}{\rho_\text{B}}$, the decoding threshold $\tau_0\triangleq\text{max}\left\{0,\alpha_\text{B}^{-1}|g|^2-1\right\}=0$. Then the outage probability can be expressed as
\begin{equation}\label{The1_qian}
\begin{aligned}
\mathcal{P}_{\text{BU}}=
&\underbrace{\mathbb{P}\left\{|g|^2>\alpha_\text{B},E_0,R_{K}^{\text{fs}}<R_{\text{F}}\right\}}_{T_0}+\sum_{k=1}^{K-1}\underbrace{\mathbb{P}\left\{|g|^2>\alpha_\text{B},E_k,R_{k}^{\text{ss}}<R_{\text{F}},R_{K}^{\text{fs}}<R_{\text{F}}\right\}}_{T_k}\\
&+\underbrace{\mathbb{P}\left\{|g|^2>\alpha_\text{B},E_K,R_{K}^{\text{ss}}<R_{\text{F}}\right\}}_{T_K}+\underbrace{\mathbb{P}\left\{|g|^2<\alpha_\text{B},R_{K}^{\text{fs}}<R_{\text{F}}\right\}}_{T_{K+1}},
\end{aligned}
\end{equation}
where the terms $T_0$, $T_k\ (1\leq k\leq K-1)$, and $T_{K}$ denote the case $\tau_0>0$. Term $T_{K+1}$ represents the case $\tau_0=0$, thus all the GF users' signals should be decoded at the first stage of SIC and $\text{U}_K$ can achieve the maximal data rate in this case. An {\color{black}approximate} expression of $\mathcal{P}_{\text{BU}}$ is shown in the following theorem.

\begin{theorem}\label{theorem1}
Assume $K\geq2$. {\color{black}Depending on the value of $\gamma_{\text{B}}\gamma_{\text{F}}$, the outage probability of $\text{U}_{\text{F}}$ for BU-SGF scheme, $\mathcal{P}_{\text{BU}}$, takes two forms. Specifically,} 
{\color{black}
	\begin{subnumcases}{\label{The1_3} \mathcal{P}_{\text{BU}}=}
		\sum_{k=0}^{K}\bar{\eta}_kI_{1;k}
		+\sum_{k=0}^{K}\bar{\eta}_kI_{2;k}+I_{3}+I_{4},&$\text{if}\ \gamma_{\text{B}}\gamma_{\text{F}}<1$\\
		\sum_{k=0}^{K}\bar{\eta}_kI_{1;k}+\sum_{k=0}^{K}\bar{\eta}_kI_{2;k}(\alpha_2\mapsto\infty)+I_{3},&$\text{if}\ \gamma_{\text{B}}\gamma_{\text{F}}\geq 1$
	\end{subnumcases}
where,
\begin{equation}
	\begin{aligned}
		I_{1;k}=&\int_{\alpha_\text{B}}^{\alpha_1}f_\text{B}(w)\left[F_\text{F}\left(\frac{w}{\rho_{\text{F}}\alpha_\text{B}}-\frac{1}{\rho_\text{F}}\right)\right]^k\left[F_\text{F}\left(\alpha_\text{F}\rho_{\text{B}}w+\alpha_\text{F}\right)-F_\text{F}\left(\frac{w}{\rho_{\text{F}}\alpha_\text{B}}-\frac{1}{\rho_\text{F}}\right)\right]^{K-k}dw,
	\end{aligned}
\end{equation}

\begin{equation}
	\begin{aligned}
		I_{2;k}=&\int_{\alpha_1}^{\alpha_2}f_\text{B}(w)\left[F_\text{F}\left(\alpha_\text{F}\right)\right]^k\left[F_\text{F}\left(\alpha_\text{F}\rho_{\text{B}}w+\alpha_\text{F}\right)-F_\text{F}\left(\frac{w}{\rho_{\text{F}}\alpha_\text{B}}-\frac{1}{\rho_\text{F}}\right)\right]^{K-k}dw,
	\end{aligned}
\end{equation}

\begin{equation}
	\begin{aligned}
		I_3=&\int_{0}^{\alpha_\text{B}}f_\text{B}(w)\left[F_\text{F}(\alpha_\text{F}\rho_{\text{B}}w+\alpha_\text{F})\right]^Kdw,\\
	\end{aligned}	
\end{equation}

\begin{equation}
\begin{aligned}
I_{4}=&\left[1-F_\text{B}(\alpha_2)\right]\left[F_\text{F}\left(\alpha_\text{F}\right)\right]^K,
\end{aligned}
\end{equation}}
$\bar{\eta}_k=\frac{K!}{k!(K-k)!}$, $\alpha_1=\alpha_\text{B}(\gamma_{\text{F}}+1)$ and $\alpha_2=\frac{\alpha_\text{B}(\gamma_{\text{F}}+1)}{1-\gamma_{\text{B}}\gamma_{\text{F}}}$, {\color{black}$I_{2;k}(\alpha_2\mapsto\infty)$ denotes the expression of replacing $\alpha_2$ in $I_{2;k}$ with $\infty$.} $F_\text{F}(x)$, $F_\text{B}(y)$, and $f_\text{B}(y)$ are shown in (\ref{CDF_F_F}), (\ref{CDF_unordered}), and (\ref{pdf_GB}), respectively.

\end{theorem}
\begin{proof}
	Please refer to Appendix \ref{Proof of The1}.
\end{proof}

Following the similar steps for proofing Theorem \ref{theorem1}, the outage probability of the BU-SGF scheme in the case of only one GF user, namely, $K=1$, can be derived straightforwardly. We omit it for space limitation.

From the proof of Theorem \ref{theorem1}, we can find the main difference of the two cases (namely, $\gamma_{\text{B}}\gamma_{\text{F}}< 1$ and $\gamma_{\text{B}}\gamma_{\text{F}}\geq 1$) is that, an additional constraint of $|g|^2<\alpha_2$ is needed for the case of $\gamma_{\text{B}}\gamma_{\text{F}}< 1$, which effectively avoid the outage error floor of U$_\text{F}$.

{\color{black}\textit{{Remark 1:}} The outage probability expressions in Theorem \ref{theorem1} have favorable extensibility. We can obtain the outage probability of the BU-SGF scheme in \cite{2020_Ding_New_SGF} by substituting $F_\text{F}(x)=1-e^{-x}$ and $f_\text{B}(y)=e^{-y}$ into Theorem \ref{theorem1}. Moreover, they can also be applied to calculate the outage probability of  BU-SGF scheme with other fading channel models. For example, for the Nakagami-$m$ fading channel with fading parameter $m$ and variance $\lambda$, the outage probability of BU-SGF scheme can be obtained by substituting $F_\text{F}(x)=1-e^{-\frac{m}{\lambda }x}\sum_{s=0}^{m-1}\frac{(\frac{m}{\lambda}x)^s}{s!}$ and $f_\text{B}(y)=\frac{m^m}{\Gamma(m)\lambda^m}y^{m-1}e^{-\frac{my}{\lambda}}$ into Theorem \ref{theorem1}, where $\Gamma(\cdot)$ denotes the Gamma function.}

{\color{black}\textit{{Remark 2:}} It is quite complicated to derive closed form expressions for the outage probability of U$_\textnormal{F}$ in Theorem 1, mainly because $T_k$ ($0\leq k\leq K$) in (10) involve different order statistics. For example, $T_0$ is a function of three channel gains: $|g|^2$, $|h_1|^2$, and $|h_K|^2$, whereas $T_k$ ($1\leq k\leq K-2$) is a function of four channel gains: $|g|^2$, $|h_k|^2$, $|h_{k+1}|^2$, and $|h_K|^2$. In addition to these dependent order statistics $|h_k|^2$, $|h_{k+1}|^2$, and $|h_K|^2$, the complicated distribution functions in (2) and (3) make the derivation of closed forms more involved. However, at high SNR, insightful approximations can be obtained as shown in the following.}


\begin{corollary}\label{corollary1}
	 Assume $K\geq 2$ and $\rho_{\text{B}}=\rho_{\text{F}}\to\infty$. The high SNR approximation of $\mathcal{P}_{\text{BU}}$ (denoted as $\mathcal{\vec{P}}_{\text{BU}}$) can be expressed as
\begin{subnumcases}{\label{corollory 1} \vec{\mathcal{P}}_{\text{BU}}=}
\sum_{k=0}^{K}\bar{\eta}_k\vec{I}_{1;k}
+\sum_{k=0}^{K}\bar{\eta}_k\vec{I}_{2;k}+\vec{I}_{3}+\vec{I}_{4},& $\text{if}\ \gamma_{\text{B}}\gamma_{\text{F}}<1$\\
\sum_{k=0}^{K}\bar{\eta}_k\vec{I}_{1;k}+\sum_{k=0}^{K}\bar{\eta}_k\vec{H}_{2;k}+\vec{I}_{3},& $\text{if}\ \gamma_{\text{B}}\gamma_{\text{F}}\geq 1$
\end{subnumcases}
where,
\begin{equation}
\begin{aligned}
\vec{I}_{1;k}=&\frac{S_\text{B}S_{\text{F}}^K}{\rho_{\text{B}}^{K+1}}\sum_{i=0}^{K-k}\binom{K-k}{i}(\gamma_{\text{F}}+1)^{K-k-i}(\gamma_{\text{F}}-\gamma_{\text{B}}^{-1})^i\\
&\times\gamma_{\text{B}}^{i+1}\sum_{j=0}^{k}\binom{k}{j}(-1)^j\frac{(1+\gamma_{\text{F}})^{k-j+i+1}-1}{k-j+i+1},
\end{aligned}
\end{equation}
\begin{equation}
\begin{aligned}
\vec{I}_{2;k}=&\frac{S_\text{B}S_{\text{F}}^K}{\rho_{\text{B}}^{K+1}}\sum_{i=0}^{K-k}\binom{K-k}{i}(\gamma_{\text{F}}+1)^{K-k-i}(\gamma_{\text{F}}-\gamma_{\text{B}}^{-1})^i\gamma_{\text{F}}^k\frac{\tilde{\alpha}_2^{i+1}-\tilde{\alpha}_1^{i+1}}{i+1},
\end{aligned}
\end{equation}
\begin{equation}
\begin{aligned}
\Omega=&\left(-\frac{1}{2}\right)^{K-k}\frac{1}{D_0+D_1}\sum_{n=1}^{N}\Phi_nc_n\sum_{m=0}^{K-k}\binom{K-k}{m}(-1)^m\\
&\times\sum_{\sum_{l=0}^{L}p_l={K-k-m}}\binom{{K-k-m}}{p_0,\dots,p_L}\sum_{\sum_{l=0}^{L}q_l=m}\binom{m}{q_0,\dots,q_L}\left(\prod_{l=0}^{L}\Psi_l^{p_l+q_l}\right),\\
\end{aligned}
\end{equation}
$\vec{I}_{3}=\frac{S_\text{B}S_{\text{F}}^K\gamma_{\text{F}}^{K}}{\rho_{\text{B}}^{K+1}(K+1)}\left[(1+\gamma_{\text{B}})^{K+1}-1\right]$, $\vec{I}_{4}=(\frac{S_{\text{F}}\gamma_{\text{F}}}{\rho_{\text{B}}})^K\left(1-\frac{S_\text{B}\tilde{\alpha}_{2}}{\rho_{\text{B}}}\right)$, $S_\text{B}=\frac{1}{D_0+D_1}\sum_{n=1}^{N}\Phi_nc_n$, $S_\text{F}=\frac{1}{2}\sum_{l=1}^{L}\Psi_l\mu_l$, $\tilde{\alpha}_1=\gamma_{\text{B}}(\gamma_{\text{F}}+1)$, $\tilde{\alpha}_2=\frac{\gamma_{\text{B}}(\gamma_{\text{F}}+1)}{1-\gamma_{\text{B}}\gamma_{\text{F}}}$, $\vec{H}_{2;k}=
\frac{\Omega\left(S_\text{F}\gamma_\text{F}\right)^k}{\rho_\text{B}^k}(\sum_{l=0}^{L}\left(p_l\mu_l\gamma_\text{F}+q_l\mu_l\gamma_\text{B}^{-1}\right)+c_n)^{-1}$.
\end{corollary}
\begin{proof}
	Please refer to Appendix \ref{Proof of The2}.
\end{proof}

{\color{black}It looks very complicated at the first sight of (\ref{corollory 1}), while most of the variables refer to the given system parameters. We only need to observe the exponent of $\rho_{\text{B}}$ in each term, which determines the achieved diversity order.} Comparing the terms in (\ref{corollory 1}), we have the following observations in high SNR region (namely, $\rho_{\text{B}}=\rho_{\text{F}}\to\infty$). In the case of $\gamma_{\text{B}}\gamma_{\text{F}}<1$, $\vec{I}_{4}$ is inversely proportional to $\rho_{\text{B}}^{K}$, and all other terms are inversely proportional to $\rho_{\text{B}}^{K+1}$. {\color{black}By applying (\ref{def_DV}), we know that $\text{U}_{\text{F}}$ can achieve a diversity order of $K$ (i.e., a full diversity order given $K$ GF users). Moreover, the outage probability of the scheduled GF user decreases with the increasing of the number of GF users $K$, and also decreases with the decreasing of the GF user's target SINR $\gamma_{\text{F}}$ and its distribution region $D_{\text{F}}$ (since $S_\text{F}$ is an increasing function of $D_{\text{F}}$). However, in the case of $\gamma_{\text{B}}\gamma_{\text{F}}\geq1$, $\vec{H}_{2;0}$ is a constant, which is irrelevant to $\rho_{\text{B}}$. Hence, $\text{U}_{\text{F}}$ achieve a diversity order of 0. In addition to $K$, $\gamma_{\text{F}}$, and $D_{\text{F}}$, the outage probability in this case is also related with the GB user's target SINR $\gamma_{\text{B}}$ and distribution region $D_0$ and $D_1$.}



\subsection{Performance Analysis for CS-SGF Scheme}
Recall that for the CS-SGF scheme, the GF user with the maximal CDF value will be admitted, whose instantaneous channel gain is denoted as $|h|^2$. Based on the descriptions in Sections \ref{system model} and \ref{two schemes}, the outage probability of $\text{U}_\text{F}$ can be expressed as
\begin{equation}\label{OP_GF_1}
\begin{aligned}
\mathcal{P}_{\text{CS}}
=&\underbrace{\mathbb{P}\left\{|g|^2<\alpha_\text{B},\frac{\rho_\text{F}|h|^2}{\rho_{\text{B}}|g|^2+1}<\gamma_{\text{F}}\right\}}_{\Delta_1}\\
&+\underbrace{\mathbb{P}\left\{|g|^2>\alpha_\text{B},\rho_\text{F}|h|^2>\tau(|g|^2),\frac{\rho_\text{F}|h|^2}{\rho_{\text{B}}|g|^2+1}<\gamma_{\text{F}}\right\}}_{\Delta_2}\\
&+\underbrace{\mathbb{P}\left\{|g|^2>\alpha_\text{B},\rho_\text{F}|h|^2<\tau(|g|^2),\rho_\text{F}|h|^2<\gamma_{\text{F}}\right\}}_{\Delta_3},\\
\end{aligned}
\end{equation}
where $\Delta_1$ denotes the case $\tau_0=0$, and $\text{U}_{\text{F}}$'s signal is decoded at the first stage of SIC. $\Delta_2$ ($\Delta_3$) denotes the case when $\tau_0>0$, and $\text{U}_{\text{F}}$'s signal is decoded at the first (second) stage of SIC.

\begin{theorem}\label{theorem2}
	{\color{black}Depending on the value of $\gamma_{\text{B}}\gamma_{\text{F}}$, the outage probability of $\text{U}_{\text{F}}$ for CS-SGF scheme, $\mathcal{P}_{\text{CS}}$, can be approximated as two forms, namely,}
\begin{subnumcases}{\label{OP_theorem21} \mathcal{P}_{\text{CS}}\approx}
\frac{\Xi_1c_n}{\Theta_1}e^{-k\mu_l\alpha_\text{F}}(1-e^{-\Theta_1\alpha_2})+\frac{\Xi_1c_n}{\Theta_2}e^{\frac{k\mu_l}{\rho_{\text{F}}}}\left(e^{-\Theta_2\alpha_2}-e^{-\Theta_2\alpha_1}\right) \notag \\
+\Xi_2e^{-c_n\alpha_1}(1-e^{-\mu_l\alpha_\text{F}})^K,&\text{if}\ $\gamma_{\text{B}}\gamma_{\text{F}}<1$\\
\frac{\Xi_1c_n}{\Theta_1}e^{-k\mu_l\alpha_\text{F}}-\frac{\Xi_1c_n}{\Theta_2}e^{\frac{k\mu_l}{\rho_{\text{F}}}}e^{-\Theta_2\alpha_1}+\Xi_2e^{-c_n\alpha_1}(1-e^{-\mu_l\alpha_\text{F}})^K,&\text{if}\ $\gamma_{\text{B}}\gamma_{\text{F}}\geq1$
\end{subnumcases}
where $\Xi_1=\frac{1}{2(D_1+D_0)}\sum_{l=1}^{L}\Psi_l\sum_{k=0}^{K}\binom{K}{k}(-1)^k\sum_{n=1}^{N}\Phi_n$, $\Xi_2=\frac{1}{2(D_1+D_0)}\sum_{l=1}^{L}\Psi_l\sum_{n=1}^{N}\Phi_n$, $\Theta_1=k\mu_l\rho_{\text{B}}\alpha_\text{F}+c_n$, and $\Theta_2=\frac{k\mu_l}{\rho_{\text{F}}\alpha_\text{B}}+c_n$.
\end{theorem}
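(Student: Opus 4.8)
The plan is to evaluate the three probabilities $\Delta_1$, $\Delta_2$, and $\Delta_3$ in (\ref{OP_GF_1}) separately, using the fact that $|g|^2$ has pdf $f_\text{B}(y)$ from (\ref{pdf_GB}) and that $|h|^2$ is the scheduled GF user's channel gain, whose CDF $F_\text{F}(x)$ is given in (\ref{CDF_GF}). First I would condition each term on $|g|^2=y$ and integrate against $f_\text{B}(y)$. For $\Delta_1$, the region is $y<\alpha_\text{B}$ and $|h|^2<\frac{\gamma_\text{F}(P_\text{B}y+1)}{P_\text{F}}=\alpha_\text{F}(P_\text{B}y+1)$, so $\Delta_1=\int_0^{\alpha_\text{B}}F_\text{F}\bigl(\alpha_\text{F}(P_\text{B}y+1)\bigr)f_\text{B}(y)\,dy$. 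For $\Delta_2$, since $\tau(|g|^2)=\alpha_\text{B}^{-1}y-1$, the condition $P_\text{F}|h|^2>\tau(|g|^2)$ together with $\frac{P_\text{F}|h|^2}{P_\text{B}y+1}<\gamma_\text{F}$ gives $\frac{\alpha_\text{B}^{-1}y-1}{P_\text{F}}<|h|^2<\alpha_\text{F}(P_\text{B}y+1)$ on $y>\alpha_\text{B}$; one must check this interval is nonempty, which is exactly where the dichotomy $\gamma_\text{B}\gamma_\text{F}<1$ versus $\gamma_\text{B}\gamma_\text{F}\geq1$ enters (the upper limit of integration becomes $\alpha_2=\frac{\alpha_\text{B}(\gamma_\text{F}+1)}{1-\gamma_\text{B}\gamma_\text{F}}$ in the first case, and $\infty$ — effectively unbounded — in the second, where $\alpha_2$ is negative/undefined). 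For $\Delta_3$, the region is $y>\alpha_\text{B}$, $|h|^2<\min\{\alpha_\text{B}^{-1}P_\text{F}^{-1}y-P_\text{F}^{-1},\alpha_\text{F}\}=\min\{\frac{\alpha_\text{B}^{-1}y-1}{P_\text{F}},\alpha_\text{F}\}$; comparing the two arguments splits the $y$-range at $y=\alpha_1=\alpha_\text{B}(\gamma_\text{F}+1)$, and for $y>\alpha_1$ the bound is the constant $\alpha_\text{F}$, yielding the term $\Xi_2e^{-c_n\alpha_1}(1-e^{-\mu_l\alpha_\text{F}})^K$.

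Next I would substitute the Gaussian-Chebyshev approximations (\ref{CDF_GF}) and (\ref{pdf_GB}) into each integral. Since $F_\text{F}(x)\approx\frac{1}{2}\sum_l\Psi_l(1-e^{-\mu_l x})^K$, I would expand the $K$-th power by the binomial theorem, $(1-e^{-\mu_l x})^K=\sum_{k=0}^K\binom{K}{k}(-1)^k e^{-k\mu_l x}$, turning every integrand into a finite sum of terms of the form $e^{-k\mu_l(\text{affine in }y)}\cdot c_n e^{-c_n y}$. Each such term integrates in closed form to an exponential divided by the linear combination of exponents; collecting the constants $\mu_l$, $\Psi_l$, $c_n$, $\Phi_n$ and the binomial coefficients into the symbols $\Xi_1$, $\Xi_2$, $\Theta_1=k\mu_l P_\text{B}\alpha_\text{F}+c_n$, $\Theta_2=\frac{k\mu_l}{P_\text{F}\alpha_\text{B}}+c_n$ reproduces exactly the two-case expression claimed. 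I would present one term's evaluation in detail and note the rest follow identically.

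The main obstacle I anticipate is the bookkeeping around the integration limits for $\Delta_2$ and $\Delta_3$ — specifically, justifying that the inner $|h|^2$-interval $\bigl(\frac{\alpha_\text{B}^{-1}y-1}{P_\text{F}},\,\alpha_\text{F}(P_\text{B}y+1)\bigr)$ is nonempty precisely when $y<\alpha_2$ in the case $\gamma_\text{B}\gamma_\text{F}<1$, and that when $\gamma_\text{B}\gamma_\text{F}\geq1$ this constraint is vacuous so the upper limit is effectively $+\infty$ (making the $e^{-\Theta_2\alpha_2}$ and $e^{-\Theta_1\alpha_2}$ contributions vanish and leaving only the $\alpha_1$-evaluated boundary terms). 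A secondary subtlety is that $F_\text{F}(\alpha_\text{F}(P_\text{B}y+1))$ should be expanded keeping the shift $+1$ inside the exponent, i.e. $e^{-k\mu_l\alpha_\text{F}(P_\text{B}y+1)}=e^{-k\mu_l\alpha_\text{F}}e^{-k\mu_l\alpha_\text{F}P_\text{B}y}$, which is where the prefactors $e^{-k\mu_l\alpha_\text{F}}$ and $e^{k\mu_l/P_\text{F}}$ in the final formula come from. Once these limit-nonemptiness conditions are settled, the remaining steps are routine term-by-term exponential integration, and I would defer the full computation to an appendix.
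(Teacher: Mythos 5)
Your proposal follows essentially the same route as the paper's Appendix C: the same decomposition into $\Delta_1$, $\Delta_2$, $\Delta_3$, the same use of the nonemptiness/hidden constraint to get the $\gamma_\text{B}\gamma_\text{F}<1$ (upper limit $\alpha_2$) versus $\gamma_\text{B}\gamma_\text{F}\geq1$ (unbounded) dichotomy in $\Delta_2$, the same split of $\Delta_3$ at $\alpha_1$ producing the $\Xi_2e^{-c_n\alpha_1}(1-e^{-\mu_l\alpha_\text{F}})^K$ term, and the same Gaussian--Chebyshev plus binomial-expansion term-by-term exponential integration, with the boundary terms telescoping to the stated expression. It is correct as outlined; no gap beyond the routine computations you already flag and defer.
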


\begin{proof}
	Please refer to Appendix \ref{Proof SC-SGF}.
\end{proof}

To obtain more insights, we derive the high SNR asymptotic expressions of the outage probabilities as well.

\begin{corollary}\label{corollary3}
When $\rho_{\text{B}}=\rho_{\text{F}}\rightarrow\infty$, if $\gamma_{\text{B}}\gamma_{\text{F}}<1$, the high SNR approximation of $\mathcal{P}_{\text{CS}}$ can be expressed as
\begin{equation}\label{corollory3_1}
\begin{aligned}
\vec{\mathcal{P}}_{\text{CS}}=&
\frac{\Xi_2c_n}{\rho_{\text{B}}}\left(\frac{\mu_l\gamma_{\text{F}}}{\rho_{B}}\right)^K\sum_{k=0}^{K}\binom{K}{k}\frac{\tilde{\alpha}_2^{k+1}}{k+1}+\Xi_2\left(\frac{\mu_l\gamma_{\text{F}}}{\rho_{\text{B}}}\right)^K\\
&+\frac{\Xi_2c_n\mu_l^K}{\rho_{\text{B}}^{K+1}}\sum_{k=0}^{K}\binom{K}{k}(-1)^{K-k}\frac{\tilde{\alpha}_1^{k+1}-\tilde{\alpha}_2^{k+1}}{\gamma_{\text{B}}^k(k+1)},\\
\end{aligned}
\end{equation}
If $\gamma_{\text{B}}\gamma_{\text{F}}\geq 1$, the high SNR approximation of $\mathcal{P}_{\text{CS}}$ equals
\begin{equation}\label{coroll33_2}
\begin{aligned}
\vec{\mathcal{P}}_{\text{CS}}=
&\Xi_1c_n\left({\color{black}\Theta_1^{'-1}-\Theta_2^{'-1}}\right)+\frac{\Xi_2c_n\mu_l^K}{\rho_{\text{B}}^{K+1}}\sum_{k=0}^{K}\binom{K}{k}(-1)^{K-k}\frac{\gamma_{\text{B}}(1+\gamma_{\text{F}})^{k+1}-\gamma_{\text{B}}}{k+1}\\
&+\frac{\Xi_2c_n}{\rho_{\text{B}}}\left(\frac{\mu_l\gamma_{\text{F}}}{\rho_{\text{B}}}\right)^K\sum_{k=0}^{K}\binom{K}{k}\frac{\gamma_{\text{B}}^{k+1}}{k+1}+\Xi_2\left(\frac{\mu_l\gamma_{\text{F}}}{\rho_{\text{B}}}\right)^K,
\end{aligned}
\end{equation}
{\color{black}where $\Theta'_1=k\mu_l\gamma_\text{F}+c_n$ and $\Theta'_2=\frac{k\mu_l}{\gamma_\text{B}}+c_n$.}
\end{corollary}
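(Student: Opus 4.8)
The plan is to derive the high-SNR asymptotics of $\mathcal{P}_{\text{CS}}$ directly from the exact approximation in Theorem \ref{theorem2}, by expanding each of the three constituent terms for large $P_{\text{B}}=P_{\text{F}}$ and retaining only the dominant contributions. First I would treat the regime $\gamma_{\text{B}}\gamma_{\text{F}}<1$. Here $\alpha_1=\alpha_{\text{B}}(\gamma_{\text{F}}+1)$ and $\alpha_2=\frac{\alpha_{\text{B}}(\gamma_{\text{F}}+1)}{1-\gamma_{\text{B}}\gamma_{\text{F}}}$ both scale like $1/P_{\text{B}}$ (since $\alpha_{\text{B}}=\gamma_{\text{B}}/P_{\text{B}}$), so I would write $\alpha_i=\tilde{\alpha}_i/P_{\text{B}}$ with $\tilde{\alpha}_1,\tilde{\alpha}_2$ as defined in Corollary \ref{corollary1}. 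Likewise $\alpha_{\text{F}}=\gamma_{\text{F}}/P_{\text{F}}$, $\Theta_1=k\mu_l P_{\text{B}}\alpha_{\text{F}}+c_n=k\mu_l\gamma_{\text{F}}+c_n+O(1)$ once $P_{\text{F}}=P_{\text{B}}$ (note $P_{\text{B}}\alpha_{\text{F}}=\gamma_{\text{F}}$ exactly), and $\Theta_2=\frac{k\mu_l}{P_{\text{F}}\alpha_{\text{B}}}+c_n=\frac{k\mu_l}{\gamma_{\text{B}}}+c_n$ exactly. For each exponential I would use $e^{-x}\approx 1-x+\tfrac{x^2}{2}-\dots$ truncated at the order needed: the factor $(1-e^{-\mu_l\alpha_{\text{F}}})^K\approx(\mu_l\gamma_{\text{F}}/P_{\text{B}})^K$, the bracket $1-e^{-\Theta_1\alpha_2}\approx\Theta_1\alpha_2-\tfrac12(\Theta_1\alpha_2)^2+\dots$, and $e^{-\Theta_2\alpha_2}-e^{-\Theta_2\alpha_1}\approx\Theta_2(\alpha_1-\alpha_2)+\tfrac{\Theta_2^2}{2}(\alpha_2^2-\alpha_1^2)+\dots$. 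Collecting the surviving $P_{\text{B}}^{-K}$ and $P_{\text{B}}^{-(K+1)}$ terms after the binomial sums over $k$ would give \eqref{corollory3_1}; I would double-check that the $P_{\text{B}}^{-K}$ contribution from the second term of \eqref{OP_theorem21} cancels against part of the first, leaving only the stated $\Xi_2(\mu_l\gamma_{\text{F}}/P_{\text{B}})^K$ plus the lower-order $P_{\text{B}}^{-(K+1)}$ corrections.

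Next I would handle the regime $\gamma_{\text{B}}\gamma_{\text{F}}\geq1$, where $\alpha_2$ is no longer positive and the outage probability in \eqref{OP_theorem21} takes the second form. The key structural difference is the term $\Xi_1 c_n(\Theta_1^{-1}-\Theta_2^{-1})$ arising from $\frac{\Xi_1 c_n}{\Theta_1}e^{-k\mu_l\alpha_{\text{F}}}-\frac{\Xi_1 c_n}{\Theta_2}e^{k\mu_l/P_{\text{F}}}e^{-\Theta_2\alpha_1}$: as $P_{\text{B}}\to\infty$ the exponentials $e^{-k\mu_l\alpha_{\text{F}}}$, $e^{k\mu_l/P_{\text{F}}}$, $e^{-\Theta_2\alpha_1}$ all tend to $1$ (since $\alpha_{\text{F}},\alpha_1\to0$ and $1/P_{\text{F}}\to0$), while $\Theta_1\to k\mu_l\gamma_{\text{F}}+c_n$ and $\Theta_2\to\frac{k\mu_l}{\gamma_{\text{B}}}+c_n$ are constants — so this combination converges to the \emph{constant} $\Xi_1 c_n(\Theta_1^{-1}-\Theta_2^{-1})$, which is exactly the leading (diversity-zero) term in \eqref{coroll33_2}. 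The remaining corrections come from Taylor-expanding those three exponentials to next order and from the third term $\Xi_2 e^{-c_n\alpha_1}(1-e^{-\mu_l\alpha_{\text{F}}})^K$, whose expansion produces $\Xi_2(\mu_l\gamma_{\text{F}}/P_{\text{B}})^K$ at order $P_{\text{B}}^{-K}$ and a $P_{\text{B}}^{-(K+1)}$ piece once $e^{-c_n\alpha_1}\approx1-c_n\tilde{\alpha}_1/P_{\text{B}}$ is included; the binomial-in-$k$ sums in \eqref{coroll33_2} are bookkeeping for these expansions with $\tilde{\alpha}_1=\gamma_{\text{B}}(\gamma_{\text{F}}+1)$.

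I expect the main obstacle to be the careful order-tracking in the $\gamma_{\text{B}}\gamma_{\text{F}}<1$ case: several terms individually scale as $P_{\text{B}}^{-K}$ or even more slowly, and one must verify that the slower pieces cancel so that the genuine leading behaviour is $P_{\text{B}}^{-K}$ (consistent with a diversity order of $K$, matching Corollary \ref{corollary2}'s $\gamma_{\text{B}}\gamma_{\text{F}}<1$ case), rather than something spuriously larger. Concretely, one has to keep the quadratic terms in the expansions of $1-e^{-\Theta_1\alpha_2}$ and $e^{-\Theta_2\alpha_2}-e^{-\Theta_2\alpha_1}$ to capture the $P_{\text{B}}^{-(K+1)}$ corrections, but no further, and then reorganize the double sums (over the Gauss--Chebyshev index and over $k$) into the compact binomial forms shown. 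A secondary subtlety is that the single-index notation ($\mu_l$, $c_n$, etc.) in the statement is really shorthand for the full Gauss--Chebyshev sums $\tfrac12\sum_l\Psi_l(\cdots)$ and $\tfrac1{D_1+D_0}\sum_n\Phi_n(\cdots)$, so I would carry those sums along implicitly and only write representative terms, exactly as the theorem statement does, to keep the algebra readable.
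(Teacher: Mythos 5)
Your plan has a concrete gap: you propose to obtain the asymptotics by Taylor-expanding the exponentials in the closed form \eqref{OP_theorem21} to first/second order, but in \eqref{OP_theorem21} the binomial expansion over $k$ has already been carried out --- $\Xi_1$ contains $\sum_{k=0}^{K}\binom{K}{k}(-1)^k$, and the $k$-dependence sits in $e^{-k\mu_l\alpha_{\text{F}}}$, $e^{k\mu_l/P_{\text{F}}}$, $\Theta_1=k\mu_l\gamma_{\text{F}}+c_n$ and $\Theta_2=k\mu_l/\gamma_{\text{B}}+c_n$. Since $\sum_{k}\binom{K}{k}(-1)^k k^m=0$ for all $m<K$, every contribution of $k$-degree below $K$ cancels identically, so the terms you keep at ``quadratic order but no further'' are precisely the ones that are annihilated: for $K\ge 2$ the first two terms of \eqref{OP_theorem21} would appear to contribute nothing under your truncation, whereas their true asymptotic content (obtained by resumming the $k$-sum back into $\bigl(1-e^{-\mu_l\alpha_{\text{F}}(P_{\text{B}}w+1)}\bigr)^K$-type integrands) is exactly the first and third terms of \eqref{corollory3_1}, both of order $P_{\text{B}}^{-(K+1)}$. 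Capturing them within your scheme requires expanding the exponentials to total order $K{+}1$ in the small quantities $\alpha_{\text{F}},\alpha_1,\alpha_2,1/P_{\text{F}}$ and invoking the vanishing-moment identities of the alternating binomial sum, not quadratic truncation. The same objection applies to the $\gamma_{\text{B}}\gamma_{\text{F}}\ge1$ case: your limit argument for the constant $\Xi_1c_n(\Theta_1^{-1}-\Theta_2^{-1})$ and for $\Xi_2(\mu_l\gamma_{\text{F}}/P_{\text{B}})^K$ from the last term is fine, but the second and third terms of \eqref{coroll33_2} cannot be produced by ``next-order'' expansion of $e^{-k\mu_l\alpha_{\text{F}}}$, $e^{k\mu_l/P_{\text{F}}}$, $e^{-\Theta_2\alpha_1}$ plus the $e^{-c_n\alpha_1}$ factor; note also that the summation index $k$ appearing in \eqref{corollory3_1} and \eqref{coroll33_2} is a fresh binomial index coming from expanding $(P_{\text{B}}w+1)^K$, not the $k$ hidden in $\Xi_1$, which your bookkeeping conflates.

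The paper avoids this difficulty by not working from the final closed form at all: it returns to the integral representations \eqref{Delta_1_2}, \eqref{Delta_2_6}, \eqref{Delta_2_9} and \eqref{Delta_3_1} from the proof of Theorem \ref{theorem2}, applies $1-e^{-x}\approx x$ inside the integrands so that the $K$-th power structure is kept intact, i.e.\ $\bigl(1-e^{-\mu_l\alpha_{\text{F}}(P_{\text{B}}w+1)}\bigr)^K\approx(\mu_l\alpha_{\text{F}})^K(P_{\text{B}}w+1)^K$, which makes the $P_{\text{B}}^{-K}$ scaling manifest without any cancellation analysis, and then integrates the resulting polynomials over $[0,\alpha_{\text{B}}]$, $[\alpha_{\text{B}},\alpha_2]$, $[\alpha_{\text{B}},\alpha_1]$; the binomial theorem applied to $(P_{\text{B}}w+1)^K$ is what generates the sums in \eqref{corollory3_1} and \eqref{coroll33_2}. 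If you want to keep your route through \eqref{OP_theorem21}, you must either carry the expansions to order $K{+}1$ together with the identities $\sum_k\binom{K}{k}(-1)^kk^m=0$ ($m<K$), or first resum the $k$-sums (e.g.\ via $\sum_k\binom{K}{k}(-1)^ke^{-kx}=(1-e^{-x})^K$ and its derivatives), which effectively reduces to the paper's pre-binomial integral argument.
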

\begin{proof}
	Please refer to Appendix \ref{Proof of corollary3}.
\end{proof}


It can be observed that, the second term of (\ref{corollory3_1}) is inversely proportional to $\rho_{\text{B}}^K$, and the other two terms of (\ref{corollory3_1}) are inversely proportional to $\rho_{\text{B}}^{K+1}$. {\color{black}Hence, $\text{U}_{\text{F}}$ can achieve a {\color{black}full} diversity order of $K$.} Moreover, it can also be observed that the first term of (\ref{coroll33_2}) is a constant,   {\color{black}thus $\text{U}_{\text{F}}$ can achieve a diversity order of $0$.}


We know from {\color{black}Corollaries \ref{corollary1} and \ref{corollary3}} that, with fixed transmit power, both BU-SGF and CS-SGF schemes can avoid outage error floors only in the case of $\gamma_{\text{B}}\gamma_{\text{F}}<1$, which restricts the application scenarios for realizing robust transmissions. In order to eliminate the outage error floors for the case of $\gamma_{\text{B}}\gamma_{\text{F}}\geq1$ as well, in the following, we propose a distributed power control strategy for the GB and GF users.

\section{Power Control Strategy and Associated Performance Analysis}\label{Propose PC and performance analyze}

In this section, the proposed power control strategy is presented first. Then, the outage performances of both BU-SGF and CS-SGF schemes applied with the power control strategy are analyzed.

\subsection{Proposed Power Control Strategy}	

\subsubsection{Power Control for {\color{black}the} GF Users}
In both BU-SGF and CS-SGF schemes, the GF users always transmit with fixed SNR $\rho_{\text{F}}$. If GF user U$_k$'s effective received SNR $\rho_{\text{F}}|h_k|^2$ is larger than the GB user $\text{U}_\text{B}$'s decoding threshold $\tau_0$, U$_k$'s signal should be decoded at the first stage of SIC with a data rate $\text{log}\left(1+\frac{\rho_{\text{F}}|h_k|^2}{\rho_{\text{B}}|g|^2+1}\right)$. On the other hand, if $\rho_{\text{F}}|h_k|^2\leq\tau_0$, U$_k$'s signal will be decoded at the second stage of SIC, with a data rate $\text{log}\left(1+\rho_{\text{F}}|h_k|^2\right)$. {\color{black}For the two cases, U$_k$'s rate can be explicitly expressed as
\begin{equation}\label{motivation_PC}
	R_k=
	\begin{cases}
		\text{log}\left(1+\frac{\rho_{\text{F}}|h_k|^2}{\rho_{\text{B}}|g|^2+1}\right),&\rho_{\text{F}}|h_k|^2>\tau_0\\
		\text{log}\left(1+\rho_{\text{F}}|h_k|^2\right),&\rho_{\text{F}}|h_k|^2\leq\tau_0\\
	\end{cases}.
\end{equation}}
\ It can be observed from (\ref{motivation_PC}) that, in the context of $\rho_{\text{F}}|h_k|^2>\tau_0$ and $\tau_0>0$, U$_k$ can reduce its transmit SNR $\rho_{\text{F}}$ to be less than $\frac{\tau_0}{|h_k|^2}$ to make its signal be decoded at the second stage of SIC, and the achievable rate will be changed from $\text{log}\left(1+\frac{\rho_{\text{F}}|h_k|^2}{\rho_{\text{B}}|g|^2+1}\right)$ to $\text{log}\left(1+\rho'_{\text{F}}|h_k|^2\right)$ accordingly, where  $\rho'_{\text{F}}\leq\frac{\tau_0}{|h_k|^2}$. On the other hand, if $\rho_{\text{F}}|h_k|^2<\tau_0$ and $\rho_m|h_k|^2>\tau_0$, U$_k$ can increase its transmit SNR and transform its rate from $\text{log}\left(1+\rho_{\text{F}}|h_k|^2\right)$ to $\text{log}\left(1+\frac{\rho''_{\text{F}}|h_k|^2}{\rho_{\text{B}}|g|^2+1}\right)$, where $\frac{\tau_0}{|h_k|^2}<\rho''_{\text{F}}\leq \rho_m$. This is the motivation of our power control strategy for the GF users.



The power control strategy for the GF user aims at maximizing its data rate. Hence, $\rho_k$ should be set as $\rho_m$, since both $\text{log}\left(1+\frac{\rho_k|h_k|^2}{\rho_{\text{B}}|g|^2+1}\right)$ and $\text{log}\left(1+\rho_k|h_k|^2\right)$ are \textcolor{black}{monotonically increasing} functions of $\rho_k$. Similar with (\ref{motivation_PC}), when $\text{U}_k$'s maximal effective received SNR is larger than U$_\text{B}$'s decoding threshold, namely, $\rho_m|h_k|^2>\tau_0$, $\text{U}_k$ may tune its transmit SNR based on the values of $\text{log}\left(1+\frac{\rho_m|h_k|^2}{\rho_\text{B}|g|^2+1}\right)$ and $\text{log}\left(1+\tau_0\right)$. To be specific, in the case of $\rho_m|h_k|^2>\tau_0$, U$_k$'s transmit SNR may be set as $\rho_m$ or $\frac{\tau_0}{|h_k|^2}$ in order to maximize U$_k$'s data rate, and the achievable rate is $\text{log}\left(1+\frac{\rho_m|h_k|^2}{\rho_\text{B}|g|^2+1}\right)$ or $\text{log}\left(1+\tau_0\right)$, respectively. Thus the power control strategy for U$_k$ is
\begin{equation}\label{power control strategy}
\rho_k=
\begin{cases}
\frac{\tau_0}{|h_k|^2},&\text{if}\ \frac{\tau_0}{\rho_m}<|h_k|^2<\frac{\tau_0(1+\rho_{\text{B}}|g|^2)}{\rho_m}\\
\rho_m,&\text{otherwise}\\
\end{cases}.
\end{equation}

{\color{black}Note that, recently, a similar power control strategy for HSIC was proposed in \cite{2021_Sun_A_new} for cognitive radio-inspired NOMA system, which however only considered one secondary user with Rayleigh fading channel.}

\subsubsection{Power Control for the GB User}
{\color{black}The power control strategy for the GB user is to increase the probability of decoding the GF user's signal at the second stage of SIC, by which a higher energy efficiency can be achieved. Energy efficiency is quite important for the GF users, which are energy constrained in many IoT applications \cite{Shirvanimoghaddam_2017_massive NOMA,2017_Dawy_ Toward mMTC}. First, if U$_\textnormal{B}$'s channel gain is too low to support its target rate even with the maximal transmit power (namely, $\text{log}\left(1+\rho_m|g|^2\right)<R_\textnormal{B}$), U$_\text{B}$'s transmit SNR $\rho_\textnormal{B}$ is set as $0$. {\color{black}Since the transmission of U$_\text{B}$ will be failed definitely in this case}, and the transmit power of U$_\textnormal{B}$ will cause negative impact on the decoding of U$_\textnormal{F}$'s signal. 
	

For the case of $\text{log}\left(1+\rho_m|g|^2\right)\geq R_\textnormal{B}$, in order to increase the probability of decoding the GF user's signal at the stage of SIC, the transmit SNR of the GB user is set as $\rho_m$. Because the GF user's signal can be decoded at the second stage of SIC when $\rho_m|h_k|^2<\tau_0(1+\rho_\text{B}|g|^2)$ according to (\ref{power control strategy}). Obviously, $\tau_0(1+\rho_\text{B}|g|^2)$ is monotonically increasing with $\rho_\text{B}$. Hence, the power control strategy for U$_\text{B}$ is
\begin{equation}\label{power control strategy-UB}
	\rho_\textnormal{B}=
	\begin{cases}
		0,&\text{if}\ |g|^2<\alpha'_\textnormal{B}\\
		\rho_m,&\text{otherwise}\\
	\end{cases}.
\end{equation}}
\ {\textit{Remark 3:}} Compared to fixed transmit power strategy in BU-SGF and CS-SGF schemes, the proposed power control strategy does not introduce extra signaling overhead and can be executed distributedly. We can see from (\ref{power control strategy}) that $\rho_k$ is decided by the maximal transmit SNR $\rho_m$, $|h_k|^2$, $\tau_0$, and $\rho_{\text{B}}|g|^2$. Note that, $\rho_m$ and $|h_k|^2$ are already known by $\text{U}_k$, while $\tau_0$ and $\rho_{\text{B}}|g|^2$ are also needed to be broadcasted by the BS for BU-SGF and CS-SGF schemes.



In the following, the outage performance of both BU-SGF scheme with power control strategy (BU-SGF-PC) and CS-SGF scheme with power control strategy (CS-SGF-PC) will be analyzed.

\subsection{Performance Analysis for BU-SGF-PC Scheme}

Following the same steps of deriving (\ref{The1_qian}), the admitted GF user's outage probability for BU-SGF-PC scheme can be calculated as
\begin{equation}\label{Th3_1_PC}
{\color{black}\begin{aligned}
\mathcal{P}_{\text{BU}}^{\text{PC}}=&
\underbrace{\mathbb{P}\left\{|g|^2>\alpha'_\text{B},E_0^{\text{PC}},\gamma_{K}^{\text{fs}}>\tau'(|g|^2),\gamma_{K}^{\text{fs}}<\gamma_{\text{F}}\right\}}_{T_{0,1}^{\text{PC}}}\\
&\underbrace{+\mathbb{P}\left\{|g|^2>\alpha'_\text{B},E_0^{\text{PC}},\gamma_{K}^{\text{fs}}<\tau'(|g|^2),\tau'(|g|^2)<\gamma_{\text{F}}\right\}}_{T_{0,2}^{\text{PC}}}\\
&+\sum_{k=1}^{K-1}\underbrace{\mathbb{P}\left\{|g|^2>\alpha'_\text{B},E_k^{\text{PC}},\gamma_{k}^{\text{ss}}<\gamma_{\text{F}},\gamma_{K}^{\text{fs}}>\tau'(|g|^2),\gamma_{K}^{\text{fs}}<\gamma_{\text{F}}\right\}}_{T_{k,1}^{\text{PC}}}\\
&+\sum_{k=1}^{K-1}\underbrace{\mathbb{P}\left\{|g|^2>\alpha'_\text{B},E_k^{\text{PC}},\gamma_{k}^{\text{ss}}<\gamma_{\text{F}},\gamma_{K}^{\text{fs}}<\tau'(|g|^2),\tau'(|g|^2)<\gamma_{\text{F}}\right\}}_{T_{k,2}^{\text{PC}}}\\
&+\underbrace{\mathbb{P}\left\{|g|^2>\alpha'_\text{B},E_K^{\text{PC}},\gamma_{K}^{\text{ss}}<\gamma_{\text{F}}\right\}}_{T_K^{\text{PC}}}
+\underbrace{\mathbb{P}\left\{|g|^2<\alpha'_\text{B},\gamma_{K}^{\text{ss}}<\gamma_{\text{F}}\right\}}_{T_{K+1}^{\text{PC}}},
\end{aligned}}
\end{equation}
where $\gamma_{k}^{\text{fs}}=\frac{\rho_m|h_k|^2}{\rho_{m}|g|^2+1}$, $\gamma_{k}^{\text{ss}}=\rho_m|h_k|^2$, {\color{black}and $\tau'(|g|^2)=\frac{|g|^2}{\alpha'_{\text{B}}}-1$}. $E_k^{\text{PC}}\ (1\leq k\leq K)$ denotes the event that there are $k$ users' maximal received SNRs are less than $\tau_0$. $T_{K+1}^{\text{PC}}$ denotes the outage probability of the admitted GF user U$_K$ when $\tau_0=0$, while the other terms represent the case $\tau_0>0$. More specifically, $T_{0}^{\text{PC}}\triangleq T_{0,1}^{\text{PC}}+T_{0,2}^{\text{PC}}$ denotes, when all GF users' maximal effective received SNRs are larger than $\tau_0$, the outage probability of the admitted GF user U$_K$, whose signal is transmitted with SNR $\rho_m$ if $\gamma_{K}^{\text{fs}}>\tau'(|g|^2)$, or with SNR $\frac{\tau'(|g|^2)}{|h_K|^2}$ if $\gamma_{K}^{\text{fs}}<\tau'(|g|^2)$.
Similarly, $T_{k}^{\text{PC}}\triangleq T_{k,1}^{\text{PC}}+T_{k,2}^{\text{PC}}$ represents, when there are $k\ (1\leq k\leq K-1)$ users' maximal effective received SNRs are larger than $\tau_0$, the outage probability of the admitted GF user (U$_k$ or U$_K$). Here, U$_K$'s  transmit SNR is $\rho_m$ if $\gamma_{K}^{\text{fs}}>\tau'(|g|^2)$, or $\frac{\tau'(|g|^2)}{|h_K|^2}$ if $\gamma_{K}^{\text{fs}}<\tau'(|g|^2)$. $T_{K}^{\text{PC}}$ shows, when all the $K$ GF users maximal effective received SNRs are less than $\tau_0$, the outage probability of the admitted GF user (U$_K$ in this case).

\begin{theorem}\label{theorem3}
The outage probability of $\text{U}_{\text{F}}$ for BU-SGF-PC scheme, $\mathcal{P}_{\text{BU}}^{\text{PC}}$, can be expressed as 
\begin{equation}\label{Theorem3}
	\begin{aligned}
	{\color{black}\mathcal{P}_{\text{BU}}^{\text{PC}}=\sum_{k=0}^{K}\bar{\eta}_kI_{1;k}(\rho_\text{F}\mapsto \rho_m,\rho_\text{B}\mapsto \rho_m)+\left[1-F_\text{B}(\alpha'_1)\right]\left[F_\text{F}\left(\alpha'_{\text{F}}\right)\right]^K+F_\text{B}(\alpha'_\text{B})[F_\text{F}(\alpha'_\text{F})]^K,}
	\end{aligned}
\end{equation}
{\color{black}where  $\alpha'_1=\alpha'_\text{B}(\gamma_{\text{F}}+1)$.}
\end{theorem}


\begin{proof}
Please refer to Appendix \ref{Proof of The3}.
\end{proof}


{\color{black}By following the same steps of deriving (16), the high SNR approximation of $\mathcal{P}_{\text{BU}}^{\text{PC}}$  when $\rho_m\to\infty$ can be expressed as}
\begin{equation}\label{The4_1}	
	\begin{aligned}
	{\color{black}\vec{\mathcal{P}}_{\text{BU}}^{\text{PC}}=}
	&{\color{black}\sum_{k=0}^{K}\bar{\eta}_k\vec{I}_{1;k}(\rho_\text{B}\mapsto \rho_m)+\left(\frac{S_{\text{F}}\gamma_{\text{F}}}{\rho_{m}}\right)^K\left(1-\frac{S_\text{B}\tilde{\alpha}_{1}}{\rho_{m}}\right)+\left(\frac{S_{\text{F}}\gamma_{\text{F}}}{\rho_{m}}\right)^K\left(\frac{S_\text{B}\gamma_{\text{B}}}{\rho_{m}}\right).}\\
	\end{aligned}
	\end{equation}
{\color{black}And $\text{U}_{\text{F}}$ can achieve a full diversity order of $K$.}



\subsection{Performance Analysis for CS-SGF-PC Scheme}\label{sec_vc}
Similar with (\ref{OP_GF_1}), the outage probability of $\text{U}_\text{F}$ for CS-SGF-PC scheme can be formulated as
\begin{equation}\label{OP_PC_1}
\begin{aligned}
{\color{black}\mathcal{P}_{\text{CS}}^{\text{PC}}}
&{\color{black}=\underbrace{\mathbb{P}\{|g|^2<\alpha'_{\text{B}},\rho_{m}|h|^2<\gamma_{\text{F}}\}}_{\Delta_6}+\underbrace{\mathbb{P}\{|g|^2>\alpha'_{\text{B}},\rho_{m}|h|^2<\tau'(|g|^2),\rho_{m}|h|^2<\gamma_{\text{F}}\}}_{\Delta_3}}\\
&{\color{black}+\underbrace{\mathbb{P}\{|g|^2>\alpha'_{\text{B}},\rho_{m}|h|^2>\tau'(|g|^2),\frac{\rho_{m}|h|^2}{\rho_{m}|g|^2+1}>\tau'(|g|^2),\frac{\rho_{m}|h|^2}{\rho_{m}|g|^2+1}<\gamma_{\text{F}}\}}_{\Delta_4}}\\
&{\color{black}+\underbrace{\mathbb{P}\{|g|^2>\alpha'_{\text{B}},\rho_{m}|h|^2>\tau'(|g|^2),\frac{\rho_{m}|h|^2}{\rho_{m}|g|^2+1}<\tau'(|g|^2),\tau'(|g|^2)<\gamma_{\text{F}}\}}_{\Delta_5}},\\
\end{aligned}
\end{equation}
 where $\Delta_{4}$ and $\Delta_{5}$ represent, when $\rho_m|h|^2>\tau'(|g|^2)$, $\text{U}_\text{F}$'s signal is transmitted with SNRs $\rho_m$ and $\frac{\tau'(|g|^2)}{|h|^2}$ (here $\tau_0=\tau'(|g|^2)$, since {\color{black}$|g|^2>\alpha'_{\text{B}}$} in $\Delta_{4}$ and $\Delta_{5}$), respectively.

\begin{theorem}\label{theorem4}
The outage probability of $\text{U}_{\text{F}}$ for CS-SGF-PC scheme can be approximated as
\begin{equation}\label{OP_theorem2}
\begin{aligned}
{\color{black}\mathcal{P}_{\text{CS}}^{\text{PC}}\approx}
&{\color{black}\frac{\Xi_1c_n}{\Theta'_1}e^{-k\mu_l\alpha'_{\text{F}}}(e^{-\Theta'_1\alpha'_{\text{B}}}-e^{-\Theta'_1\alpha'_{1}})+\Xi_2(1-e^{-c_n\alpha'_{\text{B}}})(1-e^{-\mu_l\alpha'_{\text{F}}})^K}\\
&{\color{black}+\Xi_2e^{-c_n\alpha'_{1}}(1-e^{-\mu_l\alpha'_{\text{F}}})^K.}
\end{aligned}
\end{equation}
\end{theorem}

\begin{proof}
	Please refer to Appendix \ref{Proof of The4}.
\end{proof}


Although Theorems \ref{theorem2} and \ref{theorem4} are derived for CS-SGF and CS-SGF-PC schemes, in the {\color{black}special} case of $K=1$, these expressions can also be used to evaluate the performance of random selection SGF scheme with HSIC decoding, which can also achieve fair access for the GF users. That is because all the GF users are randomly distributed within $\mathcal{D}_\text{F}$ with the same distribution, and these distributions are independent with each other.

{\color{black}Following the similar steps of deriving (\ref{corollory3_1}), the high SNR approximation of $\mathcal{P}_{\text{CS}}^{\text{PC}}$ is}
\begin{equation}\label{lemma2_2}
	\begin{aligned}
		\vec{\mathcal{P}}_{\text{CS}}^{\text{PC}}=
		&{\color{black}\frac{\Xi_2c_n}{\rho_{m}}\left(\frac{\mu_l\gamma_{\text{F}}}{\rho_m}\right)^K\sum_{k=0}^{K}\binom{K}{k}\frac{(\gamma_{\text{B}}+\gamma_{\text{B}}\gamma_{\text{F}})^{k+1}-\gamma_{\text{B}}^{K+1}}{k+1}}\\
		&{\color{black}+\Xi_2\left(\frac{c_n\gamma_{\text{B}}}{\rho_m}\right)\left(\frac{\mu_l\gamma_{\text{F}}}{\rho_m}\right)^K+\Xi_2\left(\frac{\mu_l\gamma_{\text{F}}}{\rho_m}\right)^K.}\\
	\end{aligned}
\end{equation}
{\color{black}And $\text{U}_{\text{F}}$ can achieve a {\color{black}full} diversity order of $K$.}

{\color{black}\textit{Remark 4:} From the derivation process of Theorem \ref{theorem1} and Corollaries \ref{corollary1} (Theorem \ref{theorem2} and Corollary \ref{corollary3}), we can see that,  for BU-SGF (CS-SGF) scheme and in the case of $\gamma_{\text{B}}\gamma_{\text{F}}\geq1$, the outage error floor results from $H_{2;0}$ in (16b) ($\Delta_{2}$ in (\ref{OP_GF_1})). More specifically, the unbounded U$_\text{B}$'s channel gain $|g|^2>\alpha_{1}$ ($|g|^2>\alpha_{\text{B}}$) leads to the outage error floors for BU-SGF (CS-SGF) scheme. However, we can observe from the derivation process of Theorem \ref{theorem3} (Theorem \ref{theorem4}) that, an additional constraint of $|g|^2<\alpha'_1$ is introduced after applying the power control strategy, which effectively eliminates the error floors.}






{\color{black}\textit{Remark 5:} Comparing {\color{black}Corollary \ref{corollary1} with (\ref{The4_1}) and Corollary \ref{corollary3} with (\ref{lemma2_2})}, we can obtain an interesting insight that, when $\gamma_{\text{B}}\gamma_{\text{F}}\geq1$ the proposed power control strategy can effectively avoid the outage error floor and greatly improve the outage performance. But when $\gamma_{\text{B}}\gamma_{\text{F}}<1$, the BU-SGF (CS-SGF) scheme can achieve the same outage performance as BU-SGF-PC (CS-SGF-PC) scheme in high SNR region. In other words,  in the case of $\gamma_{\text{B}}\gamma_{\text{F}}<1$, the power control strategy can only improve the outage performance of BU-SGF (CS-SGF) scheme at the moderate SNR region, but it can not improve the outage performance at high SNR region, which will be demonstrated in the simulation results.}

	\section{Simulation Results and Discussions}\label{Simulation and discussion}
	

	In this section, the performances of the SGF transmission schemes are compared and the accuracy of the theoretical analyses {\color{black}is} examined through computer simulations. In existing studies, only \cite{2020_Chao_Semi-GF,Zhang_2020_WCL} investigated the impact of user locations on SGF schemes. Thus, the random selection SGF scheme with fixed SIC orders (termed as RS-SGF-FSIC) in \cite{2020_Chao_Semi-GF,Zhang_2020_WCL} is used as {\color{black}a} benchmark, where the GF user is randomly selected from all the GF users. {\color{black}Moreover, as HSIC can achieve better performance than FSIC strategy, the random selection SGF scheme with HSIC decoding strategy (termed as RS-SGF) is also applied as a benchmark.} 
	
	The final simulation results are obtained by averaging over $10^6$ independent trials. {\color{black} In each trial, the GB and GF users are randomly distributed in $\mathcal{D}_\text{B}$ and $\mathcal{D}_\text{F}$, respectively. Hereinafter, unless other specified, {\color{black}the simulation parameters are set similar with \cite{2018_Zhou_Dynamic decode and forward} as  $\alpha=3.8$, $K=4$, $D_\text{F}=D_1=800$ m, $D_0=300$ m, $R_{\text{B}}=1$ bps/Hz, and $R_{\text{F}}=0.9$ bps/Hz. The noise power is set as $-100$ dBm}.} {\color{black}Note that, for ease of calculating the theoretical results of BU-SGF and BU-SGF-PC schemes in (\ref{The1_3}) and (\ref{Theorem3}), respectively,  Gaussian-Chebyshev quadrature \cite{Gaussian_Chebyshev} is used to calculate the integrals.} And the complexity-accuracy trade-off parameters are set as $30$ \cite{2018_Zhou_Dynamic decode and forward}, which are sufficiently large to ensure the accuracy of the approximation expressions. We set $P_{\text{B}}=P_{\text{F}}=P_m$ for the schemes without power control (namely, BU-SGF, CS-SGF, and RS-SGF schemes).
	

	\begin{figure}[!t]
		\centering
		\subfigure[$R_\text{B}=1$ bps/Hz]{\includegraphics[width=0.49\textwidth]{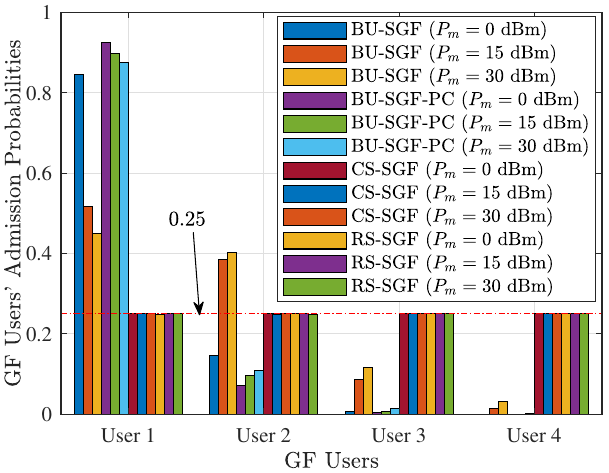}}
		\subfigure[$R_\text{B}=2$ bps/Hz]{\includegraphics[width=0.49\textwidth]{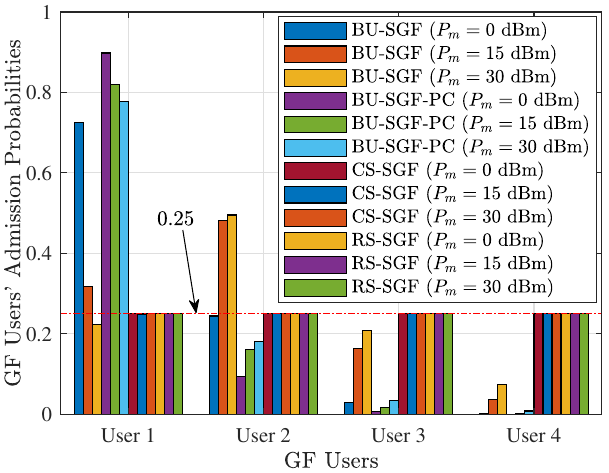}}
		\vspace{-10pt}
		\caption{\color{black}GF users' admission probabilities comparison of different schemes, where the distances from the 4 GF users to the BS vary from 200 m to 800 m with an interval of 200 m, and U$_{\text{B}}$'s distance to the BS is 200 m.}
		\label{fig1_admission_prob}
		\vspace{-20pt}
	\end{figure}

\subsection{Performance Comparison of Different Schemes}	
Fig. \ref{fig1_admission_prob} compares the GF users' admission probabilities of different schemes in 4-user case {\color{black}with different maximal transmit powers}.  We only plot the admission probability of the CS-SGF scheme, since CS-SGF-PC and CS-SGF schemes have the same fairness performance. As anticipated, both CS-SGF and RS-SGF schemes can achieve fair admission probability for each GF user {\color{black}(namely, every GF user can achieve an admission probability of 0.25)}. However, for the BU-SGF and BU-SGF-PC schemes, the GF users closer to the BS become more preferred to be admitted. That is because the BU-SGF (-PC) scheme prefers to admit the GF user with the highest data rate, and the user closer to the BS will be more likely to achieve {\color{black}a} higher rate.

	\begin{figure}[!t]
		\centering
		\subfigure[$D_1=500$ m]{\includegraphics[width=0.49\textwidth]{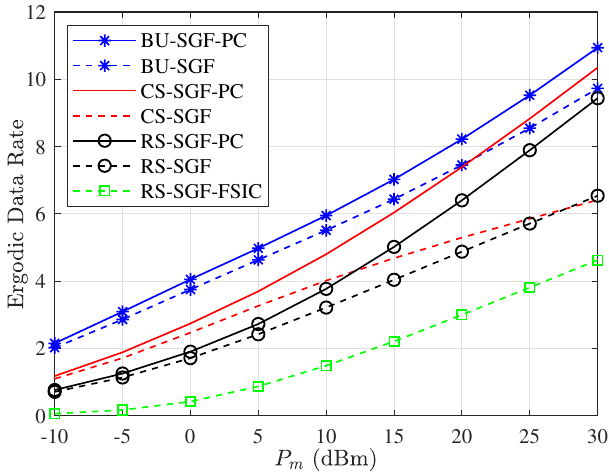}}
		\subfigure[$P_m=20$ dBm]{\includegraphics[width=0.49\textwidth]{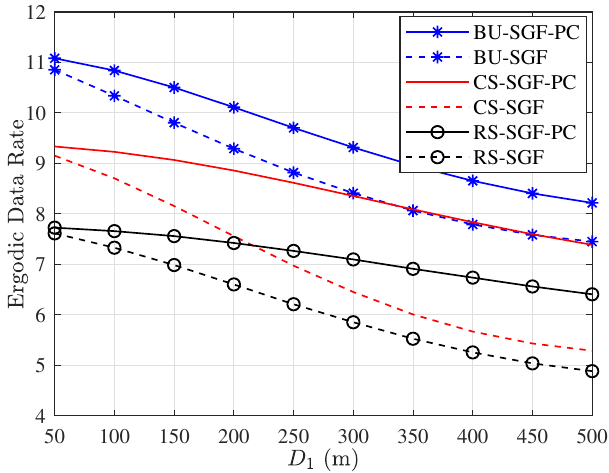}}	
		\vspace{-8pt}
		\caption{{GF users' ergodic data rate comparison of different SGF schemes, where $D_\text{F}=500$ m, $D_0=0$ m, and $R_\text{B}=1$ bps/Hz.}}
		\label{fig3_SR}
		\vspace{-15pt}
	\end{figure}

	Fig. \ref{fig3_SR}(a) depicts the ergodic data rates of different SGF schemes, where the GB and GF users are distributed in the same region (namely, $D_\text{F}=D_1=800$ m, $D_0=0$ m). {\color{black}As shown in \cite{Zhang_2020_WCL}, in order to avoid the GB user's decoding latency caused by SIC, the signal of the admitted GF user for RS-SGF-FSIC scheme \cite{Zhang_2020_WCL} is always decoded at the second stage of SIC.} It can be observed that, compared with RS-SGF-FSIC scheme, all the other schemes can achieve better ergodic data rate performance, where such improvement comes from effective use of multi-user diversity and/or the {\color{black}employment} of HSIC. It can also be observed that, with the proposed power control strategy, all the three SGF schemes' ergodic data rates can be effectively improved. Moreover, we can observe from Figs. \ref{fig1_admission_prob} and \ref{fig3_SR} that, compared with the BU-SGF (-PC) scheme, the CS-SGF (-PC) scheme can obtain fair admission probability at the price of slight data rate performance.

	
One interesting observation from Fig. \ref{fig3_SR}(a) is that the ergodic data rate of CS-SGF scheme is inferior to RS-SGF scheme in high SNR region. That is because, the impact of noise could be ignored in high SNR region, and the scheduled GF user in CS-SGF scheme will be more likely to be decoded at the first stage of SIC than that of RS-SGF scheme. The first decoded GF user will be severely interfered by the GB user's signal,  which leads a lower data rate. In order to verify this phenomenon, we carried out a relevant simulation in Fig. \ref{fig3_SR}(b).  {\color{black}It can be seen that, compared to RS-SGF scheme, the performance gain of CS-SGF scheme is reducing with the increasing of $D_1$. That is because when $D_1$ is much smaller than $D_\text{F}$, the admitted GF users in both schemes can rarely have larger channel gains than the GB user}, so that they will always be decoded at the second stage of SIC. In addition, with the increasing of $D_1$, the admitted GF user in CS-SGF scheme will be more likely to be decoded first than that of the RS-SGF scheme. However, by applying the proposed power control strategy, the CS-SGF-PC scheme can always achieve a much higher rate than RS-SGF (-PC) scheme by adjusting the decoding order adaptively, which demonstrates the importance of deploying the proposed power control strategy for CS-SGF scheme. Note that the BU-SGF scheme can always achieve better performance than CS-SGF and RS-SGF schemes, since it always admits the GF user with the largest data rate.

\begin{figure}[t]
	\centering
	\includegraphics[width=0.49\textwidth]{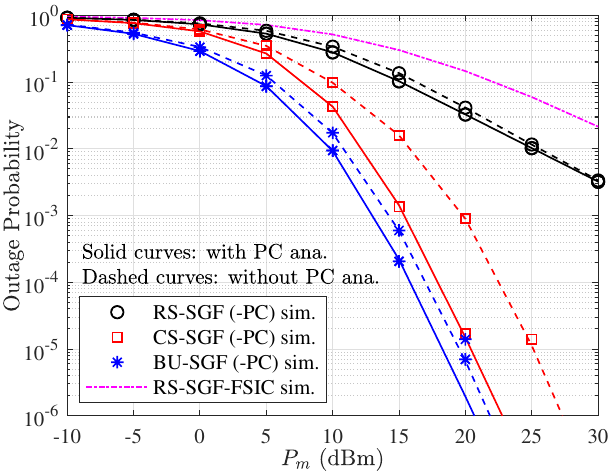}
			\vspace{-8pt}
	\caption{Outage probability comparison of different SGF schemes.}
	\label{fig2_dif_SGF}
			\vspace{-15pt}
\end{figure}
	
	The outage probabilities of different SGF schemes are compared in Fig. \ref{fig2_dif_SGF}. For ease of comparison with RS-SGF-FSIC scheme \cite{2020_Chao_Semi-GF} where the signal of the GF user is decoded at the first stage of SIC, $P_\text{B}$ is set as 10 dBm for RS-SGF-FSIC scheme \cite{2020_Chao_Semi-GF}. The RS-SGF-FSIC scheme shows comparative performance with RS-SGF-PC scheme with some impairment on the GB user's performance \cite{2020_Chao_Semi-GF}. {\color{black}Putting together the results in Figs. 2-4, we know that the CS-SGF (-PC) scheme underperforms the BU-SGF (-PC) scheme in terms of both the ergodic data rate and outage probability performances, but enables fairer access. Hence, the CS-SGF (-PC) scheme is more preferred for the scenarios with critical fairness requirements.}
	

	\begin{figure}[!t]
		\centering
		\subfigure[BU-SGF (-PC) scheme]{\includegraphics[width=0.49\textwidth]{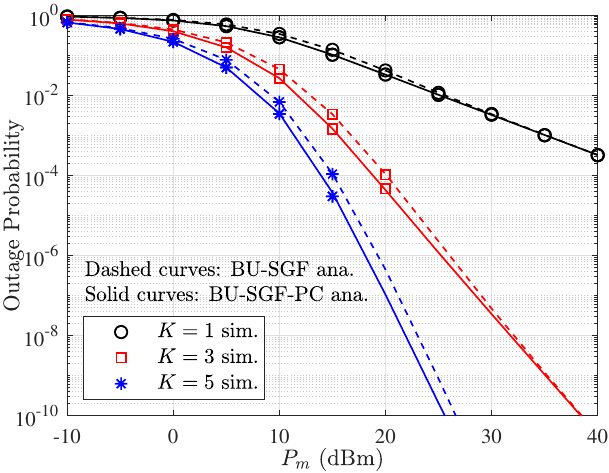}}
		\subfigure[CS-SGF (-PC) schemes]{\includegraphics[width=0.49\textwidth]{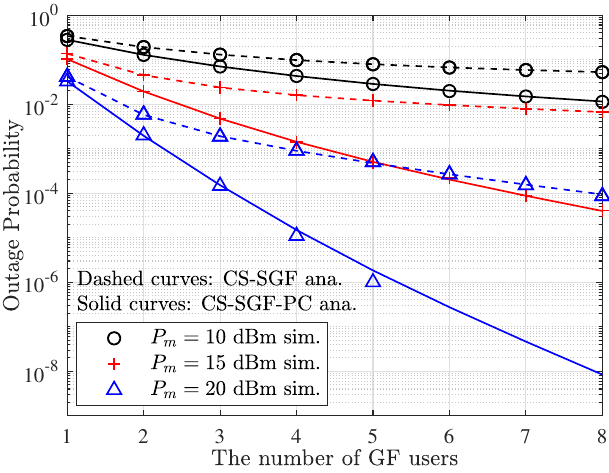}}	
		\vspace{-5pt}
		\caption{Outage probabilities of BU-SGF (-PC) and CS-SGF (-PC) schemes with different $K$.}
		\label{fig6_dif_K}
		\vspace{-15pt}
	\end{figure}
	

\subsection{Outage Performance Comparison with Different Parameters}	
	The outage probabilities of BU-SGF (-PC) scheme versus transmit SNR and CS-SGF (-PC) scheme versus the number of GF users are shown in Fig. \ref{fig6_dif_K}. It can be observed that the outage performances of the two schemes are improved with the increasing number of GF users, which verifies that the BU-SGF (-PC) scheme and the proposed CS-SGF (-PC) schemes can effectively utilize multi-user diversity. As anticipated in Remark 5, the proposed power control strategy can enhance the outage performance at the moderate SNR region, but the improvement vanishes in high SNR region.

	\begin{figure}[t]
	\centering
	\subfigure[BU-SGF-PC scheme]{\includegraphics[width=0.49\textwidth]{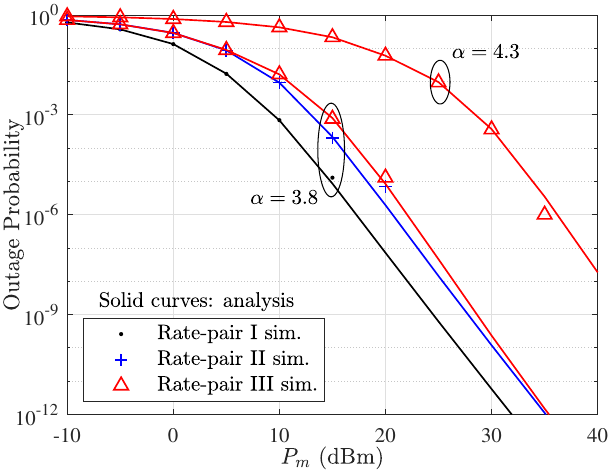}}
	\subfigure[CS-SGF-PC scheme]{\includegraphics[width=0.49\textwidth]{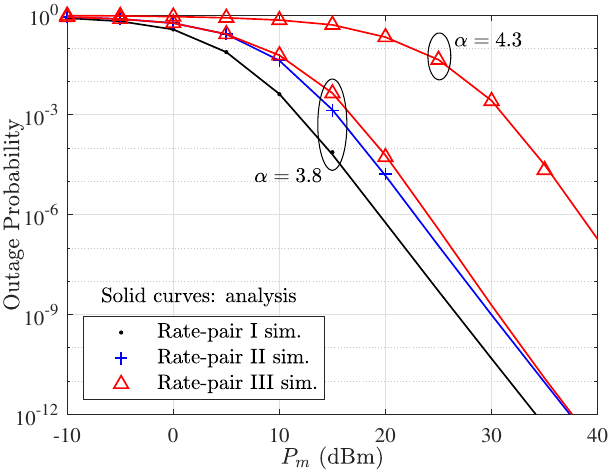}}
			\vspace{-8pt}
	\caption{Outage probability versus SNR with different rate pairs and path loss exponents. For rate-pair I, $R_\text{B}$ = 1 bps/Hz, $R_\text{F}$ = 0.5 bps/Hz; for rate-pair II, $R_\text{B}$ = 1 bps/Hz, $R_\text{F}$ = 0.9 bps/Hz; and for rate-pair III, $R_\text{B}$ = 1.5 bps/Hz, $R_\text{F}$ = 0.9 bps/Hz.}
	\label{fig7_rate_pair}
	\vspace{-15pt}
\end{figure}

	Fig. \ref{fig7_rate_pair} depicts the outage probability versus SNR with different path loss exponents and rate-pairs, where rate-pair denotes the pair of target data rates of the GB and GF users. Three rate-pairs are evaluated, specifically, for rate-pair I, $R_\text{B}=1$ bps/Hz, $R_\text{F}=0.5$ bps/Hz; for rate-pair II, $R_\text{B}=1$ bps/Hz, $R_\text{F}=0.9$ bps/Hz, and for rate-pair III, $R_\text{B}=1.5$ bps/Hz, $R_\text{F}=0.9$ bps/Hz. We can see that the outage probability increases with the increasing of GB/GF user's rate or the path loss exponent, {\color{black}which results from the demanding of GB/GF user's target rate or the increasing of path loss.} An interesting observation is that the outage probabilities of rate-pair II and rate-pair III superimpose at high SNR region for both BU-SGF-PC and CS-SGF-PC schemes. {\color{black}That can be explained by the high SNR approximation expressions in (\ref{The4_1}) and (\ref{lemma2_2}), in which  the terms with low exponent for both BU-SGF-PC and CS-SGF-PC schemes are proportional to GF user's target rate and irrelevant with the GB user's target rate.}

	\begin{figure}[t]
		\centering
		\subfigure[BU-SGF and BU-SGF-PC schemes]{\includegraphics[width=0.49\textwidth]{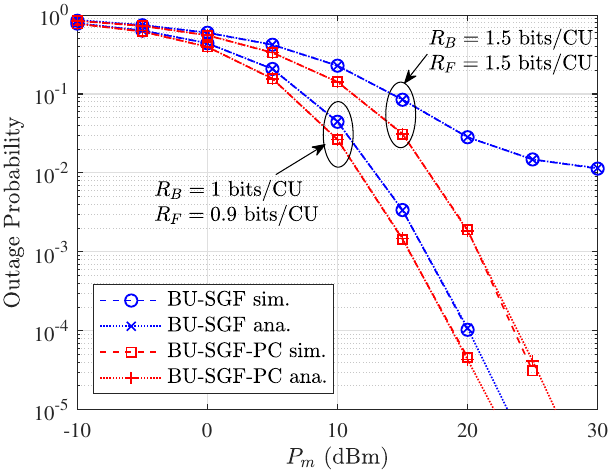}}
		\subfigure[CS-SGF and CS-SGF-PC schemes]{\includegraphics[width=0.49\textwidth]{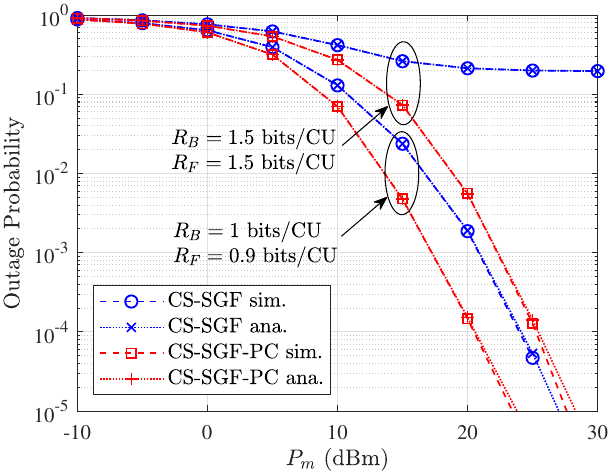}}
				\vspace{-8pt}
		\caption{Accuracy of analytical results, where $K=3$.}
		\label{fig4_ana_accuracy}
				\vspace{-15pt}
	\end{figure}

	\subsection{Verification the Accuracy of Theoretical Results}
	In Fig. \ref{fig4_ana_accuracy}, the accuracy of the analytical results is evaluated, which are based on Theorems \ref{theorem1} – \ref{theorem4}. The number of GF users are set as $K=3$. We can see from Fig. \ref{fig4_ana_accuracy} that the simulation matches well with the analytical results for both CS-SGF (-PC) and BU-SGF (-PC) schemes, which demonstrates the accuracy of analytical expressions in Theorems \ref{theorem1} – \ref{theorem4}.  We can also observe from Fig. \ref{fig4_ana_accuracy} that the proposed power control scheme can effectively improve the outage performance, especially for the case of $\gamma_{\text{B}}\gamma_{\text{F}}\geq1$.

	\begin{figure}[t]
		\centering
		\subfigure[BU-SGF and BU-SGF-PC schemes]{\includegraphics[width=0.49\textwidth]{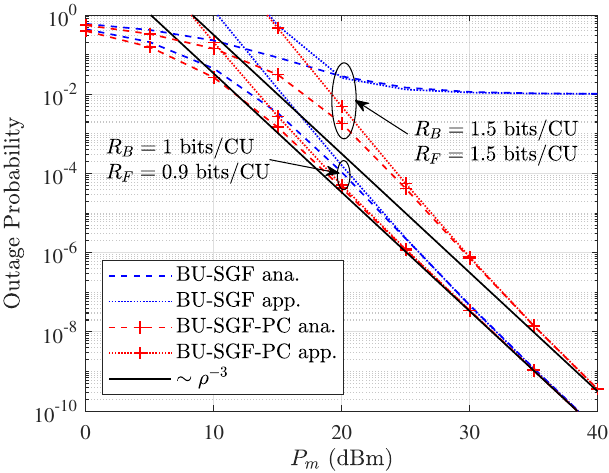}}
		\subfigure[CS-SGF and CS-SGF-PC schemes]{\includegraphics[width=0.49\textwidth]{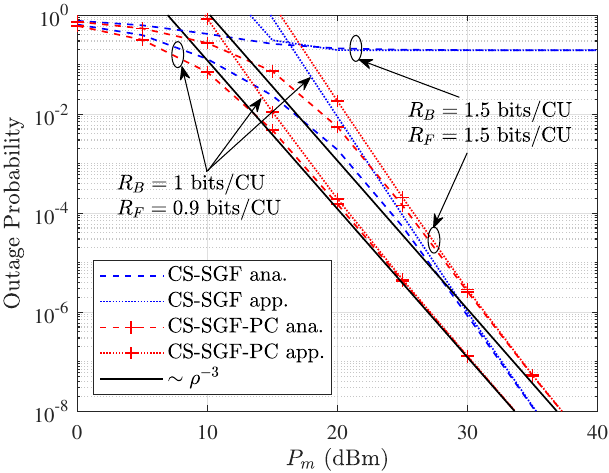}}
				\vspace{-8pt}
		\caption{Accuracy of high SNR approximation expressions, where $K=3$.}
		\label{fig5_approx_accuracy}
				\vspace{-15pt}
	\end{figure}

	Fig. \ref{fig5_approx_accuracy} evaluates the accuracy of {\color{black}the high SNR approximations} derived in (\ref{corollory 1}), (\ref{corollory3_1}), (\ref{The4_1}), and (\ref{lemma2_2}). {\color{black}Note that, the line $\rho^{-3}$ is plotted to facilitate the demonstration of the achievable diversity orders.} It can be observed from the two subfigures that all the high SNR approximations match well with the simulation results in high SNR region. {\color{black}As anticipated in Section \ref{Performance of two schemes}}, for both BU-SGF and CS-SGF schemes, zero diversity orders are achieved (namely, outage error floors exist) when the product of two users' target SINRs are larger than 1, or full diversity orders otherwise. However, as shown in the Fig. \ref{fig5_approx_accuracy} and {\color{black}demonstrated in Section \ref{Propose PC and performance analyze} that}, the BU-SGF-PC and CS-SGF-PC schemes can achieve full diversity orders even if the product of two users' target SINRs is larger than 1.


	\begin{figure}[t]
	\centering
	\subfigure[BU-SGF scheme]{\includegraphics[width=0.49\textwidth]{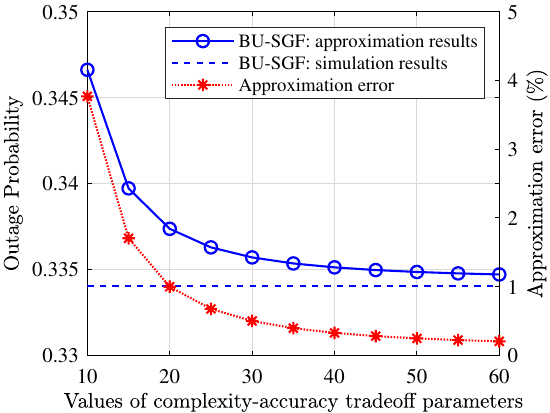}}
	\subfigure[CS-SGF scheme]{\includegraphics[width=0.49\textwidth]{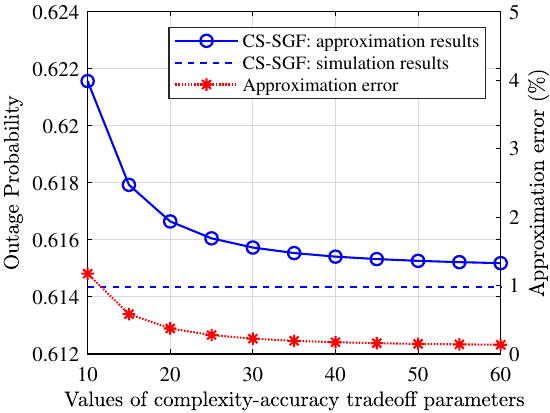}}
			\vspace{-8pt}
	\caption{The approximation error for applying Gaussian-Chebyshev quadrature, where  $P_m=0$ dBm.}
	\label{fig8_Gau_Che_accuraccy}
	\vspace{-15pt}
\end{figure}

{\color{black}In Fig. \ref{fig8_Gau_Che_accuraccy}, the approximation errors for applying Gaussian-Chebyshev quadrature are evaluated, where $P_m=0$ dBm, $R_\textnormal{B}$ = 1 bps/Hz, $R_\textnormal{F}$ = 0.9 bps/Hz. The approximation values for BU-SGF scheme are obtained by applying Gaussian-Chebyshev quadrature to (\ref{The1_3}a), and those for CS-SGF scheme are obtained by applying to (\ref{OP_theorem21}a).} The simulation results are obtained by averaging over $10^6$ independent trials. We can see that the approximation error is decreased with the increasing of the values of complexity-accuracy trade-off parameters, and the approximation value is very close to the simulation result (approximation error less than 1\%) when the complexity-accuracy trade-off parameters are set as 30.



	
	\section{Conclusions}\label{conclusion}
In this paper, we have studied the outage performances of two kinds of NOMA assisted SGF schemes in the system with randomly deployed users, namely, performance oriented SGF scheme and fairness oriented SGF scheme. A distributed power control strategy was proposed to effectively enhance the outage performance and relax the data rate constraint of the communication users for achieving full diversity orders. To facilitate performance evaluation, {\color{black}analytical expressions} of the admitted GF users' outage probabilities were developed for both CS-SGF (-PC) and BU-SGF (-PC) schemes. In addition, the achieved diversity orders of all considered schemes were derived, and {\color{black}analytical results showed that the schemes can always achieve full diversity orders by applying the proposed power control strategy}. Simulation results were provided to demonstrate the fairness superiority of the proposed CS-SGF (-PC) scheme, verify the effectiveness of the proposed power control strategy, and corroborate the accuracy of the analytical results.
	
Since full diversity orders can be obtained for these SGF schemes, in order to achieve better performance, more GF users should be allowed to contend for admission after considering some other constraints, such as access delay and coverage. In practice, the CS-SGF (-PC) scheme can be deployed if the operators care more about admission fairness. On the other hand, BU-SGF (-PC) scheme can be applied when the operators concern more about the rate performance. {\color{black}Since neither scheme is flexible, it is interesting to develop a new SGF scheme, which can dynamically strike a balance between the two extreme schemes by adjusting a weighting factor.} In addition, most existing work on SGF schemes only investigated single cell case, while the impact of inter-cell interference has not been well studied. In the future, it is promising to consider the multi-cell scenario, and analyze the corresponding performance by utilizing stochastic geometry.

\vspace{10pt}
	 \appendices
\section{Proof of Theorem \ref{theorem1}}\label{Proof of The1}
\setcounter{equation}{0}
\renewcommand{\theequation}{A.\arabic{equation}}

\subsection{Calculation of $T_0$}
Recall that $T_0$ can be expressed as
\begin{equation}\label{T10}
	\begin{aligned}
	T_0=
	&\mathbb{P}\left\{|g|^2>\alpha_\text{B},\rho_{\text{F}}|h_1|^2>\tau(|g|^2),\text{log}\left(1+\frac{\rho_{\text{F}}|h_K|^2}{\rho_{\text{B}}|g|^2+1}\right)<R_{\text{F}}\right\}\\
	=&\mathbb{P}\left\{|g|^2>\alpha_\text{B},|h_1|^2>\frac{\tau(|g|^2)}{\rho_{\text{F}}},|h_K|^2<\alpha_\text{F}(\rho_{\text{B}}|g|^2+1)\right\}.
	\end{aligned}
\end{equation}
Note that $|h_1|^2$ should be no larger than $|h_K|^2$, {\color{black}hence} there is a hidden constraint in (\ref{T10}) that 
$\frac{\tau(|g|^2)}{\rho_{\text{F}}}$ should be {\color{black}no larger} than $\alpha_\text{F}(\rho_{\text{B}}|g|^2+1)$. By applying $\tau(|g|^2)=\alpha_\text{B}^{-1}|g|^2-1$ and after some algebraic calculations, we {\color{black}know} $\frac{\tau(|g|^2)}{\rho_{\text{F}}}<\alpha_\text{F}(\rho_{\text{B}}|g|^2+1)$ holds under the following two cases
\begin{equation}\label{hidden constraint}
	\begin{cases}
|g|^2<\alpha_2, &\text{if} \ \gamma_{\text{B}}\gamma_{\text{F}}<1\\
\text{No constraint}, &\text{if} \ \gamma_{\text{B}}\gamma_{\text{F}}\geq 1
	\end{cases},
\end{equation}
where $\alpha_2=\frac{\alpha_\text{B}(1+\gamma_{\text{F}})}{1-\gamma_{\text{B}}\gamma_{\text{F}}}$. In the following, we will first focus the derivation on the case $\gamma_{\text{B}}\gamma_{\text{F}}<1$, and the case of $\gamma_{\text{B}}\gamma_{\text{F}}\geq 1$ will be derived at the end of this subsection. 

When $\gamma_{\text{B}}\gamma_{\text{F}}<1$, (\ref{T10}) can be rewritten as
\begin{equation}\label{T101}
\begin{aligned}
T_0=
\mathbb{P}&\left\{\alpha_\text{B}<|g|^2<\alpha_2,|h_1|^2>\frac{\tau(|g|^2)}{\rho_{\text{F}}},|h_K|^2<\alpha_\text{F}(\rho_{\text{B}}|g|^2+1)\right\},
\end{aligned}
\end{equation}
since $\alpha_\text{B}<\alpha_2$. Thus, $T_0$ is related to three random variables $|g|^2$, $|h_1|^2$, and $|h_K|^2$, where $|h_1|^2$ and $|h_K|^2$ are two order statistics, and $|g|^2$ is independent of them. The joint pdf of $|h_1|^2$ and $|h_K|^2$ can be expressed as \cite{2003_order_statistics}
\begin{equation}\label{jpdf_1m}
	\begin{aligned}
	f_{|h_1|^2,|h_K|^2}(x,y)=\tilde{\eta}_0f_\text{F}(x)f_\text{F}(y)\left[F_\text{F}(y)-F_\text{F}(x)\right]^{K-2},
	\end{aligned}
\end{equation}
where $x\leq y$ and $\tilde{\eta}_0=K(K-1)$. By applying (\ref{jpdf_1m}), $T_0$ can be expressed as
\begin{equation}
	\begin{aligned}
	T_0=&\int_{\alpha_\text{B}}^{\alpha_2}f_\text{B}(w)\int_{\frac{w}{\rho_{\text{F}}\alpha_\text{B}}-\frac{1}{\rho_\text{F}}}^{\alpha_\text{F}(\rho_{\text{B}}w+1)}\int_{x}^{\alpha_\text{F}(\rho_{\text{B}}w+1)}\tilde{\eta}_0f_\text{F}(x) f_\text{F}(y)\left[F_\text{F}(y)-F_\text{F}(x)\right]^{K-2}dydxdw.
	\end{aligned}
\end{equation}
We first calculate the following integral:
\begin{equation}\label{integration2}
\begin{aligned}
&\int_{\frac{w}{\rho_{\text{F}}\alpha_\text{B}}-\frac{1}{\rho_\text{F}}}^{\alpha_\text{F}(\rho_{\text{B}}w+1)}\int_{x}^{\alpha_\text{F}(\rho_{\text{B}}w+1)}f_\text{F}(x)f_\text{F}(y)\left[F_\text{F}(y)-F_\text{F}(x)\right]^{K-2}dydx\\
=&\int_{\frac{w}{\rho_{\text{F}}\alpha_\text{B}}-\frac{1}{\rho_\text{F}}}^{\alpha_\text{F}(\rho_{\text{B}}w+1)}\frac{1}{K-1}f_\text{F}(x)\left[F_\text{F}(\alpha_\text{F}\rho_{\text{B}}w+\alpha_\text{F})-F_\text{F}(x)\right]^{K-1}dx\\
=&\frac{1}{K(K-1)}\left[F_\text{F}(\alpha_\text{F}\rho_{\text{B}}w+\alpha_\text{F})-F_\text{F}\left(\frac{w}{\rho_{\text{F}}\alpha_\text{B}}-\frac{1}{\rho_\text{F}}\right)\right]^{K}.\\
\end{aligned}
\end{equation}
By using the result in (\ref{integration2}) and $\tau(|g|^2)=\alpha_\text{B}^{-1}|g|^2-1$, $T_0$ can be finally represented as
\begin{equation}\label{T03}
\begin{aligned}
T_0=&I_{1;0}+I_{2;0}.
\end{aligned}
\end{equation}

\subsection{Calculation of $T_k \ (1\leq k\leq K-1)$}
Recall that $T_k$ can be expressed as
\begin{equation}\label{Tm}
\begin{aligned}
T_k=
&\mathbb{P}\left\{|g|^2>\alpha_\text{B},E_k,\text{log}\left(1+\rho_{\text{F}}|h_k|^2\right)<R_{\text{F}},\text{log}\left(1+\frac{\rho_{\text{F}}|h_K|^2}{\rho_{\text{B}}|g|^2+1}\right)<R_{\text{F}}\right\}\\
=&\mathbb{P}\left\{|g|^2>\alpha_\text{B},\right.|h_k|^2<\alpha_\text{F},\rho_{\text{F}}|h_k|^2<\tau(|g|^2),\\
&\quad\left.\rho_{\text{F}}|h_{k+1}|^2>\tau(|g|^2),|h_K|^2<\alpha_\text{F}(\rho_{\text{B}}|g|^2+1)\right\}.
\end{aligned}
\end{equation}
Since $|h_{k+1}|^2$ should be no larger than $|h_K|^2$, by applying the hidden constraint in (\ref{hidden constraint}), $T_k$ can be rewritten as
\begin{equation}\label{Tm1}
\begin{aligned}
T_k=
&\mathbb{P}\left\{\alpha_\text{B}<|g|^2<\alpha_2,|h_k|^2<\alpha_\text{F},|h_k|^2<\rho_{\text{F}}^{-1}\tau(|g|^2),\right.\\
&\left.\qquad|h_{k+1}|^2>\rho_{\text{F}}^{-1}\tau(|g|^2),|h_K|^2<\alpha_\text{F}(\rho_{\text{B}}|g|^2+1)\right\},
\end{aligned}
\end{equation}
when $\gamma_{\text{B}}\gamma_{\text{F}}<1$.
 
In (\ref{Tm1}), one of the constraints of $|h_k|^2$ can be eliminated based on the value range of $|g|^2$. Note that
\begin{equation}\label{value range}
	\begin{cases}
	\alpha_\text{F}<\rho_{\text{F}}^{-1}\tau(|g|^2), &\text{if}\ |g|^2>\alpha_1\\
	\alpha_\text{F}\geq \rho_{\text{F}}^{-1}\tau(|g|^2), &\text{otherwise}
	\end{cases},
\end{equation}
where $\alpha_1=(1+\gamma_{\text{F}})\alpha_\text{B}$, and $\alpha_\text{B}<\alpha_1<\alpha_2=\frac{\alpha_\text{B}(1+\gamma_{\text{F}})}{1-\gamma_{\text{B}}\gamma_{\text{F}}}$. By applying (\ref{value range}), {\color{black}we have}
\begin{equation}\label{Tm2}
\begin{aligned}
T_k
=&\mathbb{P}\left\{\alpha_\text{B}<|g|^2<\alpha_1,|h_k|^2<\rho_{\text{F}}^{-1}\tau(|g|^2),|h_{k+1}|^2>\rho_{\text{F}}^{-1}\tau(|g|^2),|h_K|^2<\alpha_\text{F}(\rho_{\text{B}}|g|^2+1)\right\}\\
&+\mathbb{P}\left\{\alpha_1<|g|^2<\alpha_2,|h_k|^2<\alpha_\text{F},|h_{k+1}|^2>\rho_{\text{F}}^{-1}\tau(|g|^2),|h_K|^2<\alpha_\text{F}(\rho_{\text{B}}|g|^2+1)\right\}.\\
\end{aligned}
\end{equation}

We consider the following two cases to calculate $T_k$.
\subsubsection{$1\leq k\leq K-2$}
In this case, $|h_{k+1}|^2$ and $|h_K|^2$ are two different variables. The joint pdf of $|h_k|^2$, $|h_{k+1}|^2$, and $|h_K|^2$ is\cite{2003_order_statistics}
\begin{equation}\label{jpdf_3variables}
	\begin{aligned}
	&f_{|h_k|^2,|h_{k+1}|^2,|h_K|^2}(x,y,z)=\eta_kf_\text{F}(x)[F_\text{F}(x)]^{k-1} f_\text{F}(y)[F_\text{F}(z)-F_\text{F}(y)]^{K-k-2}f_\text{F}(z),
	\end{aligned}
\end{equation}
where $x\leq y \leq z$, and $\eta_k=\frac{K!}{(k-1)!(K-k-2)!}$.

Since $\text{U}_\text{B}$'s channel gain $|g|^2$ is independent with $|h_k|^2$, $|h_{k+1}|^2$, and $|h_K|^2$, by applying (\ref{jpdf_3variables}) and $\tau(|g|^2)=\alpha_\text{B}^{-1}|g|^2-1$, $T_k$ can be expressed as
\begin{equation}\label{Tm3}
\begin{aligned}
T_k
=&\int_{\alpha_\text{B}}^{\alpha_1}f_\text{B}(w)\int_{0}^{\frac{w}{\rho_{\text{F}}\alpha_\text{B}}-\frac{1}{\rho_\text{F}}}\eta_kf_\text{F}(x)[F_\text{F}(x)]^{k-1}\int_{\frac{w}{\rho_{\text{F}}\alpha_\text{B}}-\frac{1}{\rho_\text{F}}}^{\alpha_\text{F}(\rho_{\text{B}}w+1)}\int_{y}^{\alpha_\text{F}(\rho_{\text{B}}w+1)}f_\text{F}(y)\\
&\times[F_\text{F}(z)-F_\text{F}(y)]^{K-k-2}f_\text{F}(z)dzdydxdw\\
&+\int_{\alpha_1}^{\alpha_2}f_\text{B}(w)\int_{0}^{\alpha_\text{F}}\eta_kf_\text{F}(x)[F_\text{F}(x)]^{k-1}\int_{\frac{w}{\rho_{\text{F}}\alpha_\text{B}}-\frac{1}{\rho_\text{F}}}^{\alpha_\text{F}(\rho_{\text{B}}w+1)}\int_{y}^{\alpha_\text{F}(\rho_{\text{B}}w+1)}f_\text{F}(y)\\
&\times[F_\text{F}(z)-F_\text{F}(y)]^{K-k-2}f_\text{F}(z)dzdydxdw.\\
\end{aligned}
\end{equation}
By applying (\ref{integration2}) and after some manipulations, $T_k\ (1\leq k\leq K-2)$  can be finally expressed as
\begin{equation}\label{Tm4}
\begin{aligned}
T_k
=&\bar{\eta}_kI_{1;k}+\bar{\eta}_kI_{2;k}.\\
\end{aligned}
\end{equation}

\subsubsection{$k=K-1$}
In this case, we have $|h_{k+1}|^2=|h_K|^2$. Then
\begin{equation}\label{Tm5}
\begin{aligned}
T_{K-1}
=&\mathbb{P}\left\{\alpha_\text{B}<|g|^2<\alpha_1,|h_{K-1}|^2<\rho_{\text{F}}^{-1}\tau(|g|^2),\rho_{\text{F}}^{-1}\tau(|g|^2)<|h_K|^2<\alpha_\text{F}(\rho_{\text{B}}|g|^2+1)\right\}\\
&+\mathbb{P}\left\{\alpha_1<|g|^2<\alpha_2,|h_{K-1}|^2<\alpha_\text{F},\rho_{\text{F}}^{-1}\tau(|g|^2)<|h_K|^2<\alpha_\text{F}(\rho_{\text{B}}|g|^2+1)\right\}.\\
\end{aligned}
\end{equation}

The joint pdf of $|h_{K-1}|^2$ and $|h_K|^2$ is \cite{2003_order_statistics}
\begin{equation}\label{jpdf_2variables}
\begin{aligned}
&f_{|h_{K-1}|^2,|h_K|^2}(x,y)=\tilde{\eta}_0f_\text{F}(x)[F_\text{F}(x)]^{K-2}f_\text{F}(y),\\
\end{aligned}
\end{equation}
where $x\leq y$, and $\tilde{\eta}_0=K(K-1)$. By utilizing (\ref{jpdf_2variables}), $T_{K-1}$ can be further expressed as
\begin{equation}\label{Tm6}
	\begin{aligned}
	T_{K-1}
	=&\int_{\alpha_\text{B}}^{\alpha_1}f_\text{B}(w)\int_{0}^{\frac{w}{\rho_{\text{F}}\alpha_\text{B}}-\frac{1}{\rho_\text{F}}}\tilde{\eta}_0f_\text{F}(x)[F_\text{F}(x)]^{K-2}\int_{\frac{w}{\rho_{\text{F}}\alpha_\text{B}}-\frac{1}{\rho_\text{F}}}^{\alpha_\text{F}(\rho_{\text{B}}w+1)}f_\text{F}(y)dydxdw\\
	&+\int_{\alpha_1}^{\alpha_2}f_\text{B}(w)\int_{0}^{\alpha_\text{F}}\tilde{\eta}_0f_\text{F}(x)[F_\text{F}(x)]^{K-2}\int_{\frac{w}{\rho_{\text{F}}\alpha_\text{B}}-\frac{1}{\rho_\text{F}}}^{\alpha_\text{F}(\rho_{\text{B}}w+1)}f_\text{F}(y)dydxdw.\\
	\end{aligned}
\end{equation}
After some manipulations, $T_{K-1}$ can be finally represented as
\begin{equation}\label{Tm7}
\begin{aligned}
T_{K-1}
=&\bar{\eta}_{K-1}I_{1;K-1}+\bar{\eta}_{K-1}I_{2;K-1}.\\
\end{aligned}
\end{equation}

\subsection{Calculation of $T_K$}
We first rewrite $T_K$ as
\begin{equation}\label{TK}
	\begin{aligned}
	T_K=&\mathbb{P}\left\{|g|^2>\alpha_\text{B},E_K,\text{log}\left(1+\rho_{\text{F}}|h_K|^2\right)<R_{\text{F}}\right\}\\
	=&\mathbb{P}\left\{|g|^2>\alpha_\text{B},|h_K|^2<\rho_{\text{F}}^{-1}\tau(|g|^2),|h_K|^2<\alpha_\text{F}\right\}.\\
	\end{aligned}
\end{equation}
By utilizing (\ref{value range}), $T_K$ can be converted to
\begin{equation}
\begin{aligned}
T_K
=&\mathbb{P}\left\{\alpha_\text{B}<|g|^2<\alpha_1,|h_K|^2<\rho_{\text{F}}^{-1}\tau(|g|^2)\right\}+\mathbb{P}\left\{|g|^2>\alpha_1,|h_K|^2<\alpha_\text{F}\right\}.\\
\end{aligned}
\end{equation}
Since $|g|^2$ and $|h_K|^2$ are independent, we have
\begin{equation}\label{TK1}
	\begin{aligned}
	T_K=
	&\int_{\alpha_\text{B}}^{\alpha_1}f_\text{B}(w)\left[F_\text{F}\left(\frac{w}{\rho_{\text{F}}\alpha_\text{B}}-\frac{1}{\rho_\text{F}}\right)\right]^Kdw+\left(1-F_\text{B}(\alpha_1)\right)\left[F_\text{F}\left(\alpha_\text{F}\right)\right]^K\\
	=&I_{1,K}+I_{2,K}+I_4,
	\end{aligned}
\end{equation}
where the CDF of the largest order statistic $F_{|h_K|^2}(x)=\left[F_\text{F}(x)\right]^K$ \cite{2003_order_statistics} is applied.

\subsection{Calculation of $T_{K+1}$}
$T_{K+1}$ can be calculated following the similar way in deriving $T_K$ as
\begin{equation}\label{TK+1}
	\begin{aligned}
	T_{K+1}=
	&\mathbb{P}\left\{|g|^2<\alpha_\text{B},\text{log}\left(1+\frac{\rho_{\text{F}}|h_K|^2}{\rho_{\text{B}}|g|^2+1}\right)<R_{\text{F}}\right\}\\
	=&\mathbb{P}\left\{|g|^2<\alpha_\text{B},|h_K|^2<\alpha_\text{F}(\rho_{\text{B}}|g|^2+1)\right\}\\
	=&I_3.\\
	\end{aligned}
\end{equation}


By combining (\ref{T03}), (\ref{Tm4}), (\ref{Tm7}), (\ref{TK1}), and (\ref{TK+1}), we can get (\ref{The1_3}a).

{\color{black}We know from (\ref{hidden constraint}) that the expression of $\mathcal{P}_{\text{BU}}$ for the case of $\gamma_{\text{B}}\gamma_{\text{F}}\geq 1$ can be obtained by replacing $\alpha_2$ in (\ref{The1_3}a) with $\infty$. Therefore, when $\gamma_{\text{B}}\gamma_{\text{F}}\geq 1$,} we have (\ref{The1_3}b). The proof of Theorem \ref{theorem1} is completed.


\section{Proof of Corollary \ref{corollary1}}\label{Proof of The2}
The high SNR approximation of $\mathcal{P}_{\text{BU}}$ for the case $\gamma_{\text{B}}\gamma_{\text{F}}<1$ is derived first. It should be noted that, when $x\to 0$ and applying $e^{-x}\approx1-x$, the CDF of the GF users' unordered channel gains in (\ref{CDF_F_F}) can be approximated as 
\begin{equation}\label{CDF_F_approx}
	\begin{aligned}
	F_\text{F}(x)\approx\frac{1}{2}\sum_{l=1}^{L}\Psi_l\mu_lx=S_\text{F}x,
	\end{aligned}
\end{equation}
where $S_\text{F}=\frac{1}{2}\sum_{l=1}^{L}\Psi_l\mu_l$.
Similarly, when $y\to 0$, the pdf of the GB user's channel gain can be approximated as
\begin{equation}\label{pdf_B_approx}
\begin{aligned}
f_\text{B}(y)\approx\frac{1}{D_0+D_1}\sum_{n=1}^{N}\Phi_nc_n(1-c_ny).
\end{aligned}
\end{equation}

When $\rho_{\text{B}}=\rho_{\text{F}}\to \infty$, we have $\alpha_1=(1+\gamma_{\text{F}})\alpha_\text{B}\to 0$. And when $w\leq \alpha_1$, we have $\alpha_\text{F}\rho_{\text{B}}w+\alpha_\text{F}\leq\gamma_{\text{F}}\alpha_1+\alpha_\text{F}\to 0$ and $\frac{w}{\rho_{\text{F}}\alpha_\text{B}}-\frac{1}{\rho_\text{F}} \leq\alpha_\text{F}\to 0$. Therefore, by applying (\ref{CDF_F_approx}) and (\ref{pdf_B_approx}), the high SNR approximation of $I_{1;k}$ can be derived as
\begin{equation}\label{T0_approx}
\begin{aligned}
I_{1;k}\approx&\int_{\alpha_\text{B}}^{\alpha_1}\frac{S_{\text{F}}^k}{D_0+D_1}\sum_{n=1}^{N}\Phi_nc_n(1-c_nw)\left(\rho_{\text{F}}^{-1}\alpha_\text{B}^{-1}w-\rho_{\text{F}}^{-1}\right)^k\\
&\times\left[S_{\text{F}}(\alpha_\text{F}\rho_{\text{B}}w+\alpha_\text{F})-S_{\text{F}}\left(\rho_{\text{F}}^{-1}\alpha_\text{B}^{-1}w-\rho_{\text{F}}^{-1}\right)\right]^{K-k}dw\\
=&S_\text{B}S_{\text{F}}^K\int_{\alpha_\text{B}}^{\alpha_1}\sum_{i=0}^{K-k}\binom{K-k}{i}\left(\frac{\gamma_{\text{F}}+1}{\rho_{\text{F}}}\right)^{K-k-i}\\
&\times(\gamma_{\text{F}}w-\gamma_{\text{B}}^{-1}w)^i\sum_{j=0}^{k}\binom{k}{j}(\gamma_{\text{B}}^{-1}w)^{k-j}(-1)^j\rho_{\text{F}}^{-j}dw\\
=&\vec{I}_{1;k}.
\end{aligned}
\end{equation}

When $\rho_{\text{B}}=\rho_{\text{F}}\to \infty$, we have $\alpha_2=\frac{\alpha_\text{B}(\gamma_{\text{F}}+1)}{1-\gamma_{\text{B}}\gamma_{\text{F}}}\to 0$. If $w\leq \alpha_2$, we have $\alpha_\text{F}\rho_{\text{B}}w+\alpha_\text{F}\to 0$ and $\frac{w}{\rho_{\text{F}}\alpha_\text{B}}-\frac{1}{\rho_\text{F}} \to 0$. Similarly, by substituting (\ref{CDF_F_approx}) and (\ref{pdf_B_approx}) into $I_{2;k}$, we have
\begin{equation}\label{Tm_approx1}
\begin{aligned}
I_{2;k}
\approx
&\int_{\alpha_1}^{\alpha_2}\frac{S_{\text{F}}^K}{D_0+D_1}\sum_{n=1}^{N}\Phi_nc_n(1-c_nw)\alpha_\text{F}^k\left(\gamma_{\text{F}}w+\alpha_\text{F}-\frac{w}{\rho_{\text{F}}\alpha_\text{B}}-\frac{1}{\rho_\text{F}}\right)^{K-k}dw.\\
\end{aligned}
\end{equation}

After some algebraic manipulations, the high SNR approximation of $I_{2;k}$ becomes $\vec{I}_{2;k}$. Following the same steps as deriving the high SNR approximation of $I_{1;k}$, the high SNR approximations of $I_3$ and $I_4$ can be calculated as $\vec{I}_3$ and $\vec{I}_4$, respectively. Substituting the above results into (\ref{The1_3}a), we can obtain  (\ref{corollory 1}a).


According to the proof of Theorem \ref{theorem1}, for the case $\gamma_{\text{B}}\gamma_{\text{F}}\geq 1$, we only need to find the high SNR approximation of $I_{2;k}(\alpha_2\mapsto\infty)$. {\color{black}In the following, we first derive the} expression of $I_{2;k}(\alpha_2\mapsto\infty)$. By applying binomial theorem, $I_{2;k}(\alpha_2\mapsto\infty)$  {\color{black}can be denoted as}
\begin{equation}\label{I_2_2}
	\begin{aligned}
		I_{2;k}(\alpha_2\mapsto\infty)=
		&\left[F_\text{F}\left(\alpha_\text{F}\right)\right]^k\int_{\alpha_1}^{\infty}f_\text{B}(w)\sum_{m=0}^{K-k}\binom{K-k}{m}(-1)^m\\
		&\times F_\text{F}\left(\alpha_\text{F}\rho_{\text{B}}w+\alpha_\text{F}\right)^{K-k-m}F_\text{F}\left(\frac{w}{\rho_{\text{F}}\alpha_\text{B}}-\frac{1}{\rho_\text{F}}\right)^mdw.\\
	\end{aligned}
\end{equation} 

{\color{black}For ease of calculation}, we rewrite (\ref{CDF_F_F}) as
\begin{equation}\label{CDF_GF_re}
	\begin{aligned}
		F_\text{F}(x)\approx-\frac{1}{2}\sum_{l=0}^{L}\Psi_le^{-\mu_lx},
	\end{aligned}
\end{equation}
where $\Psi_0=-2$ and $\mu_0=0$. Based on (\ref{CDF_GF_re}), we can derive the following expression
\begin{equation}\label{F_F_M}
	\begin{aligned}
		\left[F_\text{F}(x)\right]^M
		&\approx\left(-\frac{1}{2}\sum_{l=0}^{L}\Psi_le^{-\mu_lx}\right)^M\\
		&\approx\left(-\frac{1}{2}\right)^M\sum_{\sum_{l=0}^{L}p_l=M}\binom{M}{p_0,\dots,p_L}\left(\prod_{l=0}^{L}\Psi_l^{p_l}\right)e^{-\sum_{l=0}^{L}p_l\mu_lx}.
	\end{aligned}
\end{equation}
Substituting (\ref{F_F_M}) into (\ref{I_2_2}), $I_{2;k}(\alpha_2\mapsto\infty)$ can be calculated as
\begin{equation}\label{I_2_3}
	\begin{aligned}
		I_{2;k}(\alpha_2\mapsto\infty)\approx&\frac{\left[F_\text{F}\left(\alpha_\text{F}\right)\right]^k}{D_0+D_1}\int_{\alpha_1}^{\infty}\sum_{n=1}^{N}\Phi_nc_ne^{-c_nw}\sum_{m=0}^{K-k}\binom{K-k}{m}(-1)^m\left(-\frac{1}{2}\right)^{K-k-m}\\
		&\times\sum_{\sum_{l=0}^{L}p_l={K-k-m}}\binom{{K-k-m}}{p_0,\dots,p_L}\left(\prod_{l=0}^{L}\Psi_l^{p_l}\right)e^{-\sum_{l=0}^{L}p_l\mu_l\left(\alpha_\text{F}\rho_{\text{B}}w+\alpha_\text{F}\right)}\\
		&\times\left(-\frac{1}{2}\right)^m\sum_{\sum_{l=0}^{L}q_l=m}\binom{m}{q_0,\dots,q_L}\left(\prod_{l=0}^{L}\Psi_l^{q_l}\right)e^{-\sum_{l=0}^{L}q_l\mu_l\left(\frac{w}{\rho_{\text{F}}\alpha_\text{B}}-\frac{1}{\rho_\text{F}}\right)}dw.\\
	\end{aligned}
\end{equation}
After some algebraic manipulations, we have $I_{2;k}(\alpha_2\mapsto\infty)\approx H_{2;k}$, where 
\begin{equation}\label{H_kzhi}	
\begin{aligned}
H_{2;k}=\Omega\left[F_\text{F}\left(\alpha_\text{F}\right)\right]^ke^{\sum_{l=0}^{L}\left(q_l\mu_l\rho_{\text{F}}^{-1}-p_l\mu_l\alpha_\text{F}\right)}\frac{e^{-\left[\sum_{l=0}^{L}\left(p_l\mu_l\alpha_\text{F}\rho_{\text{B}}+q_l\mu_l\rho_{\text{F}}^{-1}\alpha_\text{B}^{-1}\right)+c_n\right]\alpha_1}}{\sum_{l=0}^{L}\left(p_l\mu_l\alpha_\text{F}\rho_{\text{B}}+q_l\mu_l\rho_{\text{F}}^{-1}\alpha_\text{B}^{-1}\right)+c_n}.\\
\end{aligned}
\end{equation}
 
When $\rho_{\text{B}}=\rho_{\text{F}}\to \infty$, we have the following two approximations: $\sum_{l=0}^{L}\left(q_l\mu_l\rho_{\text{F}}^{-1}-p_l\mu_l\alpha_\text{F}\right)\to 0$ and $\left[\sum_{l=0}^{L}\left(p_l\mu_l\alpha_\text{F}\rho_{\text{B}}+q_l\mu_l\rho_{\text{F}}^{-1}\alpha_\text{B}^{-1}\right)+c_n\right]\alpha_1\to 0$. Substituting the above two approximations into (\ref{H_kzhi})  and applying $e^{-x}\approx1-x$, we can find that the high SNR approximation of $I_{2;k}(\alpha_2\mapsto\infty)$ is $\vec{H}_{2;k}$. Combing the above results, we can derive  (\ref{corollory 1}b), and the proof of Corollary \ref{corollary1} is completed.

\section{Proof of Theorem \ref{theorem2}}\label{Proof SC-SGF}
Firstly, we find that $\Delta_1$ can be rewritten as
\begin{equation}\label{Delta_1_1}
\begin{aligned}
\Delta_1=&\mathbb{P}\left\{|g|^2<\alpha_\text{B},|h|^2<\alpha_\text{F}\left(\rho_{\text{B}}|g|^2+1\right)\right\}.
\end{aligned}
\end{equation}

{\color{black}According to \cite{2020_Lu_TVT}, the CDF of the admitted GF user's channel gain in CS-SGF scheme can be expressed as
\begin{equation}\label{CDF_GF}
	\begin{aligned}
		F^\text{CS}_\text{F}(x)
		\approx&\frac{1}{2}\sum_{l=1}^{L}\Psi_l(1-e^{-\mu_lx})^K.\\
	\end{aligned}
\end{equation}}
Since the channels of the GB user $g$ and the admitted GF user $h$ are independent, by substituting (\ref{pdf_GB}) and (\ref{CDF_GF}) into (\ref{Delta_1_1}), $\Delta_1$ can be approximated as
\begin{equation}\label{Delta_1_2}
\begin{aligned}
\Delta_1\approx&\int_{0}^{\alpha_\text{B}}\Xi_2\left[1-e^{-\mu_l\alpha_\text{F}(\rho_{\text{B}}w+1)}\right]^Kc_ne^{-c_nw}dw.
\end{aligned}
\end{equation}
By applying binomial theorem, $\Delta_1$ can be further approximated as
\begin{equation}\label{Delta_1_3}
\begin{aligned}
\Delta_1\approx&\Xi_2c_n\sum_{k=0}^{K}\binom{K}{k}(-1)^k\int_{0}^{\alpha_\text{B}}e^{-k\mu_l\alpha_\text{F}(\rho_{\text{B}}w+1)-c_nw}dw\\
=&\frac{\Xi_1c_n}{\Theta_1}e^{-k\mu_l\alpha_\text{F}}\left(1-e^{-\Theta_1\alpha_\text{B}}\right).
\end{aligned}
\end{equation}

For $\Delta_2$, it can be expressed as 
\begin{equation}\label{Delta_2_1}
\begin{aligned}
\Delta_2=&\mathbb{P}\left\{|g|^2>\alpha_\text{B},\frac{\tau(|g|^2)}{\rho_\text{F}}<|h|^2<\alpha_\text{F}(\rho_{\text{B}}|g|^2+1)\right\}.
\end{aligned}
\end{equation}
Applying the hidden constraint in (\ref{hidden constraint}), we calculate $\Delta_2$ by considering two cases. For the first case of $\gamma_{\text{B}}\gamma_{\text{F}}<1$, $\Delta_2$ becomes
\begin{equation}\label{Delta_2_3}
\begin{aligned}
\Delta_2=&\mathbb{P}\left\{\alpha_\text{B}<|g|^2<\alpha_2,\frac{\tau(|g|^2)}{\rho_\text{F}}<|h|^2<\alpha_\text{F}(\rho_{\text{B}}|g|^2+1)\right\}.
\end{aligned}
\end{equation}
By applying (\ref{pdf_GB}) and (\ref{CDF_GF}), and substituting $\tau(|g|^2)=\alpha_\text{B}^{-1}|g|^2-1$, $\Delta_2$ can be approximated as
\begin{equation}\label{Delta_2_6}
\begin{aligned}
\Delta_2
\approx&\int_{\alpha_\text{B}}^{\alpha_2}\left[(1-e^{-\mu_l\alpha_\text{F}(\rho_{\text{B}}w+1)})^K-(1-e^{-\frac{\mu_l(w-\alpha_\text{B})}{\rho_\text{F}\alpha_\text{B}}})^K\right]\\
&\times \frac{1}{2}\sum_{l=1}^{L}\Psi_l\frac{1}{D_1+D_0}\sum_{n=1}^{N}\Phi_nc_ne^{-c_nw}dw.
\end{aligned}
\end{equation}
After some manipulations, the approximation of $\Delta_2$ can be simplified as
\begin{equation}\label{Delta_2_4}
\begin{aligned}
\Delta_2
\approx&\frac{\Xi_1c_n}{\Theta_1}e^{-k\mu_l\alpha_\text{F}}(e^{-\Theta_1\alpha_\text{B}}-e^{-\Theta_1\alpha_2})-\frac{\Xi_1c_n}{\Theta_2}e^{\frac{k\mu_l}{\rho_{\text{F}}}}(e^{-\Theta_2\alpha_\text{B}}-e^{-\Theta_2\alpha_2}).
\end{aligned}
\end{equation}

For the second case of $\gamma_{\text{B}}\gamma_{\text{F}}\geq 1$, $\Delta_2$ becomes
\begin{equation}\label{Delta_2_10}
\begin{aligned}
\Delta_2=&\mathbb{P}\left\{|g|^2>\alpha_\text{B},\frac{\tau(|g|^2)}{\rho_\text{F}}<|h|^2<\alpha_\text{F}(\rho_{\text{B}}|g|^2+1)\right\}.
\end{aligned}
\end{equation}
Similarly, by substituting (\ref{pdf_GB}) and (\ref{CDF_GF}), and after some manipulations, $\Delta_2$ can be approximated as
\begin{equation}\label{Delta_2_9}
\begin{aligned}
\Delta_2
\approx&\frac{\Xi_1c_n}{\Theta_1}e^{-k\mu_l\alpha_\text{F}}{e^{-\Theta_1\alpha_\text{B}}}-\frac{\Xi_1c_n}{\Theta_2}e^{\frac{k\mu_l}{\rho_\text{F}}}{e^{-\Theta_2\alpha_\text{B}}}.\\
\end{aligned}
\end{equation}

For $\Delta_3$, by applying (\ref{value range}), it can be rewritten as
\begin{equation}\label{Delta_3_1}
\begin{aligned}
\Delta_{3}=&\mathbb{P}\left\{\alpha_\text{B}<|g|^2<\alpha_1,|h|^2<\frac{\tau(|g|^2)}{\rho_{\text{F}}}\right\}+\underbrace{\mathbb{P}\left\{|g|^2>\alpha_1,|h|^2<\alpha_\text{F}\right\}}_{A_1}.
\end{aligned}
\end{equation}
By applying (\ref{CDF_unordered}) and (\ref{CDF_GF}), $A_1$ can be approximated as
\begin{equation}\label{Delta_3_2}
\begin{aligned}
A_1\approx&\frac{1}{2}\left[1-\frac{1}{D_1+D_0}\sum_{n=1}^{N}\Phi_n\left(1-e^{-c_n\alpha_1}\right)\right]\sum_{l=1}^{L}\Psi_l(1-e^{-\mu_l\alpha_\text{F}})^K\\
=&\Xi_2e^{-c_n\alpha_1}\left(1-e^{-\mu_l\alpha_\text{F}}\right)^K,
\end{aligned}
\end{equation}
where the equality holds by using $\frac{1}{D_1+D_0}\sum_{n=1}^{N}\Phi_n=1$. Following the previous derivation procedure and considering (\ref{Delta_3_2}), $\Delta_3$ can be approximated as
\begin{equation}\label{Delta_3_3}
\begin{aligned}
\Delta_{3}\approx&\frac{\Xi_1c_n}{\Theta_2}e^{\frac{k\mu_l}{\rho_{\text{F}}}}\left(e^{-\Theta_2\alpha_\text{B}}-e^{-\Theta_2\alpha_1}\right)+\Xi_2e^{-c_n\alpha_1}\left(1-e^{-\mu_l\alpha_\text{F}}\right)^K.
\end{aligned}
\end{equation}

Finally, we can obtain (\ref{OP_theorem21}) by combining (\ref{Delta_1_3}), (\ref{Delta_2_4}), and (\ref{Delta_3_3}), and combining (\ref{Delta_1_3}), (\ref{Delta_2_9}), and (\ref{Delta_3_3}). The proof is complete.

\section{Proof of Corollary \ref{corollary3}}\label{Proof of corollary3}
	The high SNR approximation of $\mathcal{P}_{\text{CS}}$ is derived based on (\ref{Delta_1_2}), (\ref{Delta_2_6}), (\ref{Delta_2_9}), and (\ref{Delta_3_1}). When $\rho_{\text{B}}=\rho_{\text{F}}\rightarrow\infty$, we have $\alpha_\text{B}\to 0$ and $\alpha_2\to 0$. By applying $e^{-x}\approx1-x$ for $x\rightarrow0$ to (\ref{Delta_1_2}), the high SNR approximation of $\Delta_1$ can be expressed as
	\begin{equation}\label{Delta_1_approx0}
		\begin{aligned}
			\Delta_1
			\approx&\Xi_2c_n\mu_l^K\alpha_\text{F}^K\int_{0}^{\alpha_\text{B}}(\rho_{\text{B}}x+1)^Kdx.
		\end{aligned}
	\end{equation}
	By applying binomial theorem, (\ref{Delta_1_approx0}) can be further derived as
	\begin{equation}\label{Delta_1_approx}
		\begin{aligned}
			\Delta_1
			\approx&\Xi_2c_n\mu_l^K\alpha_\text{F}^K\int_{0}^{\alpha_\text{B}}\sum_{k=0}^{K}\binom{K}{k}\rho_{\text{B}}^kx^kdx\\
			=&\frac{\Xi_2c_n}{\rho_{\text{B}}}\left(\frac{\mu_l\gamma_{\text{F}}}{\rho_{\text{B}}}\right)^K\sum_{k=0}^{K}\binom{K}{k}\frac{\gamma_{\text{B}}^{k+1}}{k+1}.
		\end{aligned}
	\end{equation}

	Using (\ref{Delta_2_6}), the high SNR approximation of $\Delta_2$ in the case of $\gamma_{\text{B}}\gamma_{\text{F}}<1$ can be derived as
	\begin{equation}\label{Delta_2_approx1}
		\begin{aligned}
			\Delta_2
			\approx&\Xi_2c_n\int_{\alpha_\text{B}}^{\alpha_2}\left(1-e^{-\frac{\mu_l\gamma_{\text{F}}(\rho_{\text{B}}y+1)}{\rho_\text{F}}}\right)^Kdy-\Xi_2c_n\int_{\alpha_\text{B}}^{\alpha_2}\left(1-e^{-\frac{\mu_l(y-\alpha_{\text{B}})}{\rho_{\text{F}}\alpha_{\text{B}}}}\right)^Kdy\\
			=&\frac{\Xi_2c_n\mu_l^K\gamma_{\text{F}}^K}{\rho_{\text{B}}^{K+1}}\sum_{k=0}^{K}\binom{K}{k}\frac{\tilde{\alpha}_2^{k+1}-\gamma_{\text{B}}^{k+1}}{k+1}-\frac{\Xi_2c_n\mu_l^K}{\rho_{\text{B}}^{K+1}}\sum_{k=0}^{K}\binom{K}{k}(-1)^{K-k}\frac{\tilde{\alpha}_2^{k+1}-\gamma_{\text{B}}^{k+1}}{\gamma_{\text{B}}^k(k+1)}.\\
		\end{aligned}
	\end{equation}
	
	By applying (\ref{Delta_2_9}), the high SNR approximation of $\Delta_2$ in the case of $\gamma_{\text{B}}\gamma_{\text{F}}\geq 1$ can be expressed as
	\begin{equation}\label{Gamma_2}
		\begin{aligned}
			\Delta_{2}
			\approx&\Xi_1c_n\left(\Theta_1^{'-1}-\Theta_2^{'-1}\right).
		\end{aligned}
	\end{equation}
	
	Using (\ref{Delta_3_1}) and following the same lines of deriving (\ref{Delta_1_approx}), we can obtain the approximation of $\Delta_3$ as
	\begin{equation}\label{Delta_3_approx}
		\begin{aligned}
			\Delta_3
			\approx&\frac{\Xi_2c_n\mu_l^K\gamma_{\text{B}}}{\rho_{\text{B}}^{K+1}}\sum_{k=0}^{K}\binom{K}{k}(-1)^{K-k}\frac{(1+\gamma_{\text{F}})^{k+1}-1}{k+1}+\Xi_2\left(\frac{\mu_l\gamma_{\text{F}}}{\rho_{\text{B}}}\right)^K.
		\end{aligned}
	\end{equation}
	
	Finally, we can obtain (\ref{corollory3_1}) by combining (\ref{Delta_1_approx}), (\ref{Delta_2_approx1}), and (\ref{Delta_3_approx}). Similarly, we can derive (\ref{coroll33_2}) by combining (\ref{Delta_1_approx}), (\ref{Gamma_2}), and (\ref{Delta_3_approx}). The proof is complete.

\section{Proof of Theorem \ref{theorem3}}\label{Proof of The3}
As all the GF users' maximal transmit SNRs are assumed to be $\rho_m$ and their channel gains are ordered as (\ref{channel order}), $T_0^{\text{PC}}$ can be rewritten as
\begin{equation}\label{T0_PC1}
\begin{aligned}
T_0^{\text{PC}}=
&\mathbb{P}\left\{|g|^2>\alpha'_\text{B},\rho_m|h_1|^2>\tau'(|g|^2),\frac{\rho_m|h_K|^2}{\rho_{m}|g|^2+1}>\tau'(|g|^2),\frac{\rho_m|h_K|^2}{\rho_{m}|g|^2+1}<\gamma_{\text{F}}\right\}\\
&+\mathbb{P}\left\{|g|^2>\alpha'_\text{B},\rho_m|h_1|^2>\tau'(|g|^2),\frac{\rho_m|h_K|^2}{\rho_{m}|g|^2+1}<\tau'(|g|^2),\tau'(|g|^2)<\gamma_{\text{F}}\right\}.
\end{aligned}
\end{equation}

After some manipulations, $T_0^{\text{PC}}$ can be rewritten as
\begin{equation}\label{T0_PC2}
\begin{aligned}
T_0^{\text{PC}}=
&\mathbb{P}\left\{\alpha'_\text{B}<|g|^2<\alpha'_1,|h_1|^2>\frac{\tau'(|g|^2)}{\rho_m},\frac{\rho_{m}|g|^2+1}{\rho_m}\tau'(|g|^2)<|h_K|^2<\frac{\rho_{m}|g|^2+1}{\rho_m}\gamma_{\text{F}}\right\}\\
&+\mathbb{P}\left\{\alpha'_\text{B}<|g|^2<\alpha'_1,|h_1|^2>\frac{\tau'(|g|^2)}{\rho_m},|h_K|^2<\frac{\rho_{m}|g|^2+1}{\rho_m}\tau'(|g|^2)\right\},\\
\end{aligned}
\end{equation}
where $\alpha'_1=\alpha'_\text{B}(1+\gamma_{\text{F}})$, and the constraint $|g|^2<\alpha'_1$ is obtained due to $\tau'(|g|^2)<\gamma_{\text{F}}$ and $\tau'(|g|^2)=(\alpha'_\text{B})^{-1}|g|^2-1$. Surprisingly, we find that the two terms in (\ref{T0_PC2}) can be combined, namely, $T_0^{\text{PC}}$ can be further simplified as
\begin{equation}\label{T0_PC3}
\begin{aligned}
T_0^{\text{PC}}=
&\mathbb{P}\left\{\alpha'_\text{B}<|g|^2<\alpha'_1,|h_1|^2>\frac{\tau'(|g|^2)}{\rho_m},|h_K|^2<\frac{\rho_{m}|g|^2+1}{\rho_m}\gamma_{\text{F}}\right\}.\\
\end{aligned}
\end{equation}

By comparing the calculation process from (\ref{T10}) to (\ref{T101}) with that from (\ref{T0_PC1}) to (\ref{T0_PC3}), we can find that the constraint of $\gamma_{\text{B}}\gamma_{\text{F}}<1$ required for the derivation of (\ref{T101}) is not required here any more. Following the same steps of deriving (\ref{T03}), we have $T_0^{\text{PC}}=I_{1;0}(\rho_\text{F}\mapsto \rho_m)$.

Then, $T_k^{\text{PC}}(1\leq k\leq K-1)$ can be rewritten as
\begin{equation}\label{Tk_PC1}
\begin{aligned}
T_k^{\text{PC}}=&
\mathbb{P}\left\{|g|^2>\alpha'_\text{B},\rho_m|h_k|^2<\gamma_{\text{F}},\rho_m|h_k|^2<\tau'(|g|^2),\right.\\
&\left.\rho_m|h_{k+1}|^2>\tau'(|g|^2),\tau'(|g|^2)<\frac{\rho_m|h_K|^2}{\rho_{m}|g|^2+1}<\gamma_{\text{F}}\right\}\\
&+\mathbb{P}\left\{|g|^2>\alpha'_\text{B},|h_k|^2<\frac{\tau'(|g|^2)}{\rho_m},|h_{k+1}|^2>\frac{\tau'(|g|^2)}{\rho_m},\right.\\
&\left. \rho_m|h_k|^2<\gamma_{\text{F}},\frac{\rho_m|h_K|^2}{\rho_{m}|g|^2+1}<\tau'(|g|^2),\tau'(|g|^2)<\gamma_{\text{F}}\right\}.
\end{aligned}
\end{equation}

After some algebraic operations, (\ref{Tk_PC1}) can be converted to
\begin{equation}\label{Tk_PC2}
\begin{aligned}
T_k^{\text{PC}}=
&\mathbb{P}\left\{\alpha'_\text{B}<|g|^2<\alpha'_1,|h_k|^2<\frac{\tau'(|g|^2)}{\rho_m},|h_{k+1}|^2>\frac{\tau'(|g|^2)}{\rho_m},\right.\\
&\left.|h_k|^2<\frac{\gamma_{\text{F}}}{\rho_m},\frac{\rho_{m}|g|^2+1}{\rho_m}\tau'(|g|^2)<|h_K|^2<\frac{\rho_{m}|g|^2+1}{\rho_m}\gamma_{\text{F}}\right\}\\
&+\mathbb{P}\left\{\alpha'_\text{B}<|g|^2<\alpha'_1,|h_k|^2<\frac{\tau'(|g|^2)}{\rho_m},|h_k|^2<\frac{\gamma_{\text{F}}}{\rho_m},\right.\\
&\left. |h_{k+1}|^2>\frac{\tau'(|g|^2)}{\rho_m},|h_K|^2<\frac{\rho_{m}|g|^2+1}{\rho_m}\tau'(|g|^2)\right\}.
\end{aligned}
\end{equation}

Combining the two parts of (\ref{Tk_PC2}) together, we have
\begin{equation}\label{Tk_PC3}
\begin{aligned}
T_k^{\text{PC}}=
&\mathbb{P}\left\{\alpha'_\text{B}<|g|^2<\alpha'_1,|h_k|^2<\frac{\tau'(|g|^2)}{\rho_m},|h_k|^2<\frac{\gamma_{\text{F}}}{\rho_m},\right.\\
&\left. |h_{k+1}|^2>\frac{\tau'(|g|^2)}{\rho_m},|h_K|^2<\frac{\rho_{m}|g|^2+1}{\rho_m}\gamma_{\text{F}}\right\}.
\end{aligned}
\end{equation}
Since $|g|^2<\alpha'_1$, by applying (\ref{value range}), we can further simplify $T_k^{\text{PC}}$ as
\begin{equation}\label{Tk_PC4}
\begin{aligned}
T_k^{\text{PC}}=
&\mathbb{P}\left\{\alpha'_\text{B}<|g|^2<\alpha'_1,|h_k|^2<\frac{\tau'(|g|^2)}{\rho_m}, |h_{k+1}|^2>\frac{\tau'(|g|^2)}{\rho_m},|h_K|^2<\frac{\rho_{m}|g|^2+1}{\rho_m}\gamma_{\text{F}}\right\}.
\end{aligned}
\end{equation}

Following the same lines of deriving (\ref{Tm4}) and (\ref{Tm7}), we can obtain $T_k^{\text{PC}}\ =I_{1;k}(\rho_\text{F}\mapsto \rho_m,\rho_\text{B}\mapsto \rho_m)$ for $(1\leq k\leq K-1)$. Similar with the derivation of (\ref{TK1}) and (\ref{TK+1}), we have $T_K^{\text{PC}}\ =I_{1;K}(\rho_\text{F}\mapsto \rho_m,\rho_\text{B}\mapsto \rho_m)+\left[1-F_\text{B}(\alpha'_1)\right]\left[F_\text{F}\left(\alpha'_{\text{F}}\right)\right]^K$ and $T_{K+1}^{\text{PC}}\ =F_\text{B}(\alpha'_\text{B})[F_\text{F}(\alpha'_\text{F})]^K$. Combining all the above results, we can get (\ref{Theorem3}), and the proof is complete.

%

\section{Proof of Theorem \ref{theorem4}}\label{Proof of The4}
We simplify (\ref{OP_PC_1}) first. Note that $\Delta_4$ and $\Delta_5$ can be respectively converted to
\begin{equation}\label{Delta_4_1}
\begin{aligned}
\Delta_4=&
\mathbb{P}\left\{\alpha'_{\text{B}}<|g|^2<\alpha'_{1},\frac{\rho_{m}|g|^2+1}{\rho_m}\tau'(|g|^2)<|h|^2<\frac{\rho_{m}|g|^2+1}{\rho_m}\gamma_{\text{F}}\right\}
\end{aligned}
\end{equation}
and
\begin{equation}\label{Delta_5_1}
\begin{aligned}
\Delta_5=
&\mathbb{P}\left\{\alpha'_{\text{B}}<|g|^2<\alpha'_{1},\frac{\tau'(|g|^2)}{\rho_m}<|h|^2<\frac{\rho_{m}|g|^2+1}{\rho_m}\tau'(|g|^2)\right\}.
\end{aligned}
\end{equation}
Combining (\ref{Delta_4_1}) and (\ref{Delta_5_1}), $\Delta_{4}+\Delta_{5}$ can be represented as
\begin{equation}\label{Delta_45}
\begin{aligned}
\Delta_{4}+\Delta_{5}=&
\mathbb{P}\left\{\alpha'_{\text{B}}<|g|^2<\alpha'_{1},\frac{\tau'(|g|^2)}{\rho_m}<|h|^2<\frac{\rho_{m}|g|^2+1}{\rho_m}\gamma_{\text{F}}\right\}.
\end{aligned}
\end{equation}

By applying (\ref{value range}), $\Delta_{3}$ can be rewritten as
\begin{equation}\label{Delta_3_10}
\begin{aligned}
\Delta_{3}=&\mathbb{P}\left\{\alpha'_\text{B}<|g|^2<\alpha'_1,|h|^2<\frac{\tau'(|g|^2)}{\rho_{m}}\right\}+\mathbb{P}\left\{|g|^2>\alpha'_1,|h|^2<\alpha'_\text{F}\right\}.
\end{aligned}
\end{equation}
 Then the sum of $\Delta_{3}$, $\Delta_{4}$ and $\Delta_{5}$ can be expressed as
\begin{equation}\label{Delta_345}
\begin{aligned}
\Delta_{3}+\Delta_{4}+\Delta_{5}=&\mathbb{P}\left\{\alpha'_{\text{B}}<|g|^2<\alpha'_{1},|h|^2<\frac{\rho_{m}|g|^2+1}{\rho_m}\gamma_{\text{F}}\right\}\\
&+\mathbb{P}\left\{|g|^2>\alpha'_{1},|h|^2<\alpha'_\text{F}\right\}.
\end{aligned}
\end{equation}
Finally, by combining $\Delta_6$ and (\ref{Delta_345}), $\mathcal{P}_{\text{CS}}^{\text{PC}}$ can be expressed as
\begin{equation}\label{OP_PC_2}
\begin{aligned}
\mathcal{P}_{\text{CS}}^{\text{PC}}=
&\mathbb{P}\left\{|g|^2<\alpha'_{\text{B}},|h|^2<\alpha'_\text{F}\right\}+\mathbb{P}\left\{\alpha'_{\text{B}}<|g|^2<\alpha'_{1},|h|^2<(\rho_{m}|g|^2+1)\alpha'_\text{F}\right\}\\
&+\mathbb{P}\left\{|g|^2>\alpha'_{1},|h|^2<\alpha'_\text{F}\right\}.
\end{aligned}
\end{equation}

Following the same steps for the derivation of Theorem \ref{theorem2}, we can obtain (\ref{OP_theorem2}), and the proof is complete.

	\vspace{10pt}

	
\end{document}